\documentclass[11pt, a4paper]{article}

\usepackage[utf8]{inputenc}
\usepackage[T1]{fontenc}
\usepackage[english]{babel}
\usepackage{amsmath, amssymb, amsthm, mathtools}
\usepackage{enumitem}
\usepackage{geometry}
\usepackage[colorlinks=true, allcolors=blue]{hyperref}
\usepackage[capitalize]{cleveref}
\usepackage[numbers, sort&compress]{natbib}

\theoremstyle{plain}
\newtheorem{theorem}{Theorem}
\newtheorem{lemma}[theorem]{Lemma}
\newtheorem{proposition}[theorem]{Proposition}
\newtheorem{corollary}[theorem]{Corollary}
\theoremstyle{definition}

\theoremstyle{remark}
\newtheorem*{remark}{Remark}

\renewcommand{\d}{\operatorname{d}\!}
\newcommand{\E}{\mathbb E}
\newcommand{\Pp}{\mathbb P}
\newcommand{\R}{\mathbb R}
\newcommand{\Z}{\mathbb Z}
\newcommand{\Lam}{\Lambda}
\newcommand{\cP}{\mathcal P}
\newcommand{\fr}{\mathrm{fr}}
\newcommand{\per}{\mathrm{per}}
\newcommand{\ebf}{\mathbf e}

\title{\Large\bf Wasserstein Stability, Couplings Across Volumes, and the $1+1/d$ Moment Thresholds in the Edwards--Anderson Model}
\author{Mauris Chueng\thanks{These authors contributed equally to this work.}\ \textsuperscript{,}\thanks{Corresponding author: maurischueng@gmail.com} \and Hexiang Wang\footnotemark[1] \and Keheng Zhu\footnotemark[1]}
\date{}

\newcommand{\Addresses}{{%
    \bigskip
    \footnotesize

    \textsc{Mauris Chueng}, \textsc{School of Statistics and Data Science, Jilin University of Finance and Economics, Changchun, 130117, China}\par\nopagebreak
    \texttt{maurischueng@gmail.com}
    \medskip

    \textsc{Hexiang Wang}, \textsc{School of Mathematical Sciences, Nankai University, Tianjin, 300071, China}\par\nopagebreak
    \texttt{Kui6539@outlook.com}
    \medskip

    \textsc{Keheng Zhu}, \textsc{Academy for Multidisciplinary Studies and School of Mathematical Sciences, Capital Normal University, Beijing, 100048, China}\par\nopagebreak
    \texttt{hexistartop@gmail.com}
    \medskip
}}

\begin{document}

\maketitle
\begin{abstract}
We study the quenched pressure of the nearest-neighbor Edwards--Anderson Ising model with free and periodic boundary conditions. First, we prove that the infinite-volume pressure is $\beta d$-Lipschitz in the coupling law for the $1$-Wasserstein distance, yielding quantitative thermodynamic limits for spatially inhomogeneous disorder. Second, for periodic volumes, we prove that almost-sure convergence under every joint coupling of the finite-volume disorder arrays is equivalent to complete convergence of the one-volume pressure laws. Third, we prove that $\mathbb{E}\vert{}J\vert{}^{1+1/d}<\infty$ guarantees this universal-coupling conclusion. We further show that this exponent is optimal among uniform power-moment assumptions: for every $1\leq q<1+1/d$, there is a centered symmetric law with finite $q$-th moment for which canonical nested volumes converge almost surely, whereas independently resampled volumes with the same fixed-volume marginals converge in probability but not almost surely. Finally, in dimension one, we prove that the first-moment condition is also necessary for a finite limiting pressure.
\end{abstract}

\medskip
\noindent\textbf{Keywords.} Edwards--Anderson model; quenched pressure; thermodynamic limit; Wasserstein distance; inhomogeneous disorder; complete convergence; heavy-tailed disorder


\tableofcontents

\section{Introduction}

The Edwards--Anderson model \cite{EA1975} is the standard finite-dimensional random-bond Ising model. Existence of its quenched thermodynamic limit belongs to a well-developed theory of disordered lattice systems; representative results include \cite{Zegarlinski1991,CGP2004,ContucciGraffi2004,CS2009}. The influence of free, periodic, and antiperiodic boundary conditions on the surface pressure was studied in \cite{CGSurface2004}, and periodic boundary stability is part of a broader principle going back to \cite{FisherLebowitz1970}. We therefore use the i.i.d. thermodynamic limit only as a baseline and focus on two questions not answered by its usual formulation.

Firstly, how stable is the limiting pressure when the disorder distribution changes with the edge? We prove a finite-volume transport estimate and show that the limiting pressure is Lipschitz on the space of coupling laws with finite first moment, equipped with the $1$-Wasserstein distance. This gives an explicit homogenization statement for independent but non-identically distributed couplings: if their local laws approach a reference law in spatially averaged transport distance, the quenched pressures approach the reference thermodynamic limit. A common quantile construction and a Kolmogorov condition give a pathwise version.

Secondly, what does ``almost surely as $L\to\infty$'' mean when a new disorder array is drawn at every volume? The one-volume marginals determine convergence in probability but not almost sure convergence. Using the classical notion of complete convergence introduced by Hsu and Robbins \cite{HsuRobbins1947}, we give a necessary and sufficient condition for convergence under every joint coupling of the volume arrays. A block decomposition and the critical Baum--Katz theorem \cite{BaumKatz1965} then show that the finite moment $\E|J|^{1+1/d}<\infty$ is sufficient. Below this exponent we construct finite-$q$-moment laws for which the canonical nested construction converges almost surely, while independent arrays across volumes have infinitely many order-one pressure spikes. Thus the exponent $1+1/d$ is sharp on the moment scale.

The transport estimate itself is elementary; the contribution is the combination of its thermodynamic, spatially inhomogeneous, and coupling-theoretic consequences with the sharp moment-scale threshold and a family of counterexamples below it. These statements do not appear to have been recorded together in this form. No claim of novelty is made for the i.i.d. existence theorem in \cref{sec:i.i.d.} or for the classical Baum--Katz theorem used at the endpoint.

The logical division is as follows. After deterministic comparisons and a
self-contained i.i.d. baseline, \cref{thm:wasserstein} establishes transport
stability of both finite- and infinite-volume pressures.
\Cref{thm:inhomogeneous} turns this estimate into a quantitative
thermodynamic limit for spatially varying laws. The universal-coupling
criterion in \cref{thm:universal-coupling} is then combined with a
Baum--Katz block argument in \cref{thm:moment-criterion}. Sharpness below
the endpoint is established by \cref{thm:heavy-tail}. This counterexample is not a
reformulation of the i.i.d. theorem, because it compares different joint laws
across volumes having exactly the same one-volume marginals.

\section{Model and deterministic comparisons}\label{sec:model}

Fix $d\geq1$ and let
\[
 \Lam_L=\{0,\ldots,L-1\}^d,\qquad |\Lam_L|=L^d.
\]
For $x\in\Z_{\geq0}^d$ and $1\leq i\leq d$, the pair $(x,i)$ labels the positively oriented edge from $x$ to $x+\ebf_i$. Let $\nu$ be a probability law on $\R$ satisfying
\begin{equation}\label{eq:first-moment}
	m_1(\nu):=\int_{\R}|t|\,\nu(dt)<\infty,
\end{equation}
and let $\{J_{x,i}\}$ be i.i.d. with law $\nu$ on a common probability space. For $L\geq1$, the free Hamiltonian is
\[
 H_L^{\fr}(\sigma;J)
 =-\sum_{i=1}^d\sum_{\substack{x\in\Lam_L\\x_i\leq L-2}}
 J_{x,i}\sigma_x\sigma_{x+\ebf_i}.
\]
For $L\geq3$, the periodic Hamiltonian is
\[
 H_L^{\per}(\sigma;J)
 =-\sum_{i=1}^d\sum_{x\in\Lam_L}
 J_{x,i}\sigma_x\sigma_{x+\ebf_i\bmod L}.
\]
The restriction $L\geq3$ keeps the periodic interaction graph simple. For $b\in\{\fr,\per\}$ and $\beta\geq0$, set
\[
 Z_L^b(\beta,J)=\sum_{\sigma\in\{-1,1\}^{\Lam_L}}
 e^{-\beta H_L^b(\sigma;J)},\qquad
 \cP_L^b(\beta,J)=\frac{1}{L^d}\log Z_L^b(\beta,J),
\]
and write $p_{L,\nu}^b(\beta)=\E\cP_L^b(\beta,J)$.

We first record two deterministic estimates.

\begin{lemma}\label{lem:comparison}
Let $\Omega$ and $I$ be finite, let $S_e:\Omega\to[-1,1]$, and define
\[
 Z(K)=\sum_{\sigma\in\Omega}
 \exp\!\left(\beta\sum_{e\in I}K_eS_e(\sigma)\right).
\]
For coefficient families $K$ and $K'$,
\begin{equation}\label{eq:coefficient-comparison}
	|\log Z(K)-\log Z(K')|
	\leq\beta\sum_{e\in I}|K_e-K'_e|.
\end{equation}
\end{lemma}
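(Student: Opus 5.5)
The plan is to compare the two partition functions term by term, using only the sup-norm bound $|S_e|\le 1$. Set $E_K(\sigma)=\beta\sum_{e\in I}K_eS_e(\sigma)$ and $\Delta=\beta\sum_{e\in I}|K_e-K'_e|$. For every $\sigma\in\Omega$ we have
\[
 |E_K(\sigma)-E_{K'}(\sigma)|\le\beta\sum_{e\in I}|K_e-K'_e|\,|S_e(\sigma)|\le\Delta .
\]
Here $\beta\ge0$ is used when pulling $\beta$ outside the absolute value; for general real $\beta$ one simply writes $|\beta|$.

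It follows that $e^{E_K(\sigma)}\le e^{\Delta}e^{E_{K'}(\sigma)}$ for each $\sigma$. Summing over the finite set $\Omega$ gives $Z(K)\le e^{\Delta}Z(K')$. Both partition functions are strictly positive finite sums of exponentials; the degenerate case $\Omega=\emptyset$ gives $Z=0$ and is excluded, or is vacuous. Taking logarithms yields $\log Z(K)-\log Z(K')\le\Delta$.

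Exchanging the roles of $K$ and $K'$ gives the reverse inequality, which proves \eqref{eq:coefficient-comparison}. An alternative route is to interpolate with $K_t=K'+t(K-K')$ and differentiate: $\frac{d}{dt}\log Z(K_t)=\beta\sum_e(K_e-K'_e)\langle S_e\rangle_t$, where $\langle\cdot\rangle_t$ denotes the Gibbs average, which lies in $[-1,1]$. Integrating this over $t\in[0,1]$ gives the same bound. I would present the pointwise comparison, since it avoids differentiability bookkeeping.

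There is no real obstacle. The only point needing care is that the bound is uniform in $\sigma$, which is exactly why the hypothesis $S_e:\Omega\to[-1,1]$ is imposed.
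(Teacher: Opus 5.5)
Your proposal is correct and follows the same approach as the paper: bound the two exponents pointwise by $\beta\sum_{e\in I}|K_e-K'_e|$ using $|S_e|\le1$, sum over the finite set $\Omega$ to get $e^{-\Delta}Z(K')\le Z(K)\le e^{\Delta}Z(K')$, and take logarithms. Your remarks that this uses $\beta\ge0$ (or $|\beta|$ in general) and $\Omega\ne\emptyset$ are valid side conditions that the paper leaves implicit.
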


\begin{proof}
The two exponents differ pointwise by at most
$a=\beta\sum_e|K_e-K'_e|$. Hence
$e^{-a}Z(K')\leq Z(K)\leq e^aZ(K')$.
\end{proof}

In particular, on a finite graph $G=(V,E)$,
\begin{equation}\label{eq:basic-bound}
	|V|\log2-\beta\sum_{e\in E}|J_e|
	\leq\log Z_G(\beta,J)
	\leq |V|\log2+\beta\sum_{e\in E}|J_e|.
\end{equation}
Thus all pressures considered below are integrable whenever the relevant edge laws have finite first moments.

\begin{lemma}\label{lem:max-bond}
For every finite simple graph $G=(V,E)$,
\begin{equation}\label{eq:max-bond}
	\log Z_G(\beta,J)\geq\beta\max_{e\in E}|J_e|,
\end{equation}
where the maximum over the empty set is zero.
\end{lemma}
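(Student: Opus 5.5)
We need to prove $\log Z_G \ge \beta \max_e |J_e|$, where $Z_G=\sum_\sigma \exp(\beta\sum_{e=\{u,v\}\in E} J_e\sigma_u\sigma_v)$.

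If $E$ is empty, $Z_G = 2^{|V|}\geq 1$, so $\log Z_G\ge 0$ and the claim holds. Otherwise, fix an edge $e_0=\{u,v\}$ attaining $\max_e|J_e|$ and put $s=\operatorname{sign}(J_{e_0})$, with $s=1$ if $J_{e_0}=0$. Since all terms of $Z_G$ are positive, we may restrict the sum to configurations with $\sigma_u\sigma_v=s$; simplicity of $G$ guarantees $u\neq v$, so this constraint is satisfiable. On this set the $e_0$-term contributes exactly $\beta|J_{e_0}|$, and therefore
\[
 Z_G\geq e^{\beta|J_{e_0}|}\sum_{\sigma:\,\sigma_u\sigma_v=s}\exp\Bigl(\beta\sum_{e\neq e_0}J_e\sigma_e\Bigr),
\]
where $\sigma_e$ denotes $\sigma_x\sigma_y$ for $e=\{x,y\}$. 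It remains to show that the remaining sum is at least $1$.

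The constrained configuration set is closed under the global flip $\sigma\mapsto-\sigma$, but that does not help directly. A cleaner route is a gauge symmetrisation: for $\tau\in\{-1,1\}^V$, the map $\sigma\mapsto\tau\sigma$ turns the couplings into $J_e\tau_e$. Alternatively, we can use Jensen's inequality. The number of constrained configurations is $N=2^{|V|-1}$, so the remaining sum equals $N$ times an average of exponentials, which by Jensen is at least
\[
 N\exp\Bigl(\beta\sum_{e\neq e_0}J_e\,\overline{\sigma_e}\Bigr),
\]
where $\overline{\sigma_e}$ is the uniform average of $\sigma_e$ over the constrained set. For $e\neq e_0$ with $e=\{x,y\}$, simplicity of $G$ implies $\{x,y\}\neq\{u,v\}$. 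Hence at least one of $x,y$, say $x$, lies outside $\{u,v\}$. Flipping $\sigma_x$ alone preserves the constraint $\sigma_u\sigma_v=s$ and negates $\sigma_e$, which gives $\overline{\sigma_e}=0$. Thus the remaining sum is at least $N\geq1$, and indeed
\[
 \log Z_G\geq\beta\max_e|J_e|+(|V|-1)\log2.
\]

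The main point to check is the averaging step, namely that $\overline{\sigma_e}=0$ for every $e\neq e_0$. This is exactly where simplicity is used: a parallel edge to $e_0$ would have $\sigma_e$ fixed by the constraint, and a loop would make $\sigma_e\equiv1$.
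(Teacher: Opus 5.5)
Your proof is correct and follows essentially the paper's argument: restrict to the configurations with $\sigma_u\sigma_v=s$, on which every other edge product averages to zero because it contains a spin outside $\{u,v\}$, and this is where simplicity enters. The only difference is that you apply Jensen to the whole constrained sum, while the paper picks one configuration with nonnegative residual energy; this gives you the slightly stronger bound $\beta\max_e|J_e|+(|V|-1)\log 2$.
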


\begin{proof}
Choose $e_0=\{u,v\}$ maximizing $|J_e|$ and let
$s=\operatorname{sign}(J_{e_0})$; the case $J_{e_0}=0$ is immediate. Under the uniform measure on the nonempty finite set of configurations satisfying $\sigma_u\sigma_v=s$, every other edge product has mean zero. Indeed, an edge different from $e_0$ either is disjoint from $e_0$ or shares exactly one endpoint with it, and in both cases at least one unconstrained symmetric spin remains. The conditional average of
$\sum_{e\ne e_0}J_e\sigma_e$ is therefore zero, so some constrained configuration makes this sum nonnegative. Its full interaction energy is at least $|J_{e_0}|$, and the corresponding Boltzmann weight proves \eqref{eq:max-bond}.
\end{proof}

\section{The i.i.d. baseline}\label{sec:i.i.d.}

For completeness, we prove the classical i.i.d. conclusion in the precise form
used later. The proof is retained because its explicit constants feed into
the transport and complete-convergence estimates; no originality is claimed
for existence of the i.i.d. thermodynamic limit.

Fix integers $L\geq k\geq1$, write $r=\lfloor L/k\rfloor$, and set
\[
 C_{L,k}=\{0,\ldots,rk-1\}^d,
 \qquad R_{L,k}=\Lam_L\setminus C_{L,k}.
\]
Partition $C_{L,k}$ into the $r^d$ translates
$B_z=kz+\Lam_k$, where $z\in\{0,\ldots,r-1\}^d$. Delete every free edge
that is not internal to one of the blocks, and denote the deleted set by
$D_{L,k}$.

\begin{lemma}\label{lem:deleted-count}
For $L\geq k\geq1$,
\begin{equation}\label{eq:deleted-count}
	\frac{|D_{L,k}|}{L^d}
	\leq\frac{d}{k}+\frac{2d^2k}{L}.
\end{equation}
\end{lemma}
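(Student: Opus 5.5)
We count the deleted free edges by orientation. Fix a direction $i\in\{1,\dots,d\}$; a free edge $(x,i)$ with $x\in\Lam_L$, $x_i\le L-2$, is deleted exactly when its two endpoints do not lie in a common block $B_z$. We split these edges into two classes: those whose endpoints both lie in $C_{L,k}$ (\emph{inter-block} edges), and those with at least one endpoint in the remainder $R_{L,k}$. Summing the two counts over $i$ will give \eqref{eq:deleted-count}.

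\emph{Inter-block edges.} If $x,x+\ebf_i\in C_{L,k}$ lie in different blocks, then $x_i+1\equiv0\pmod k$, that is, $x_i\in\{k-1,2k-1,\dots,(r-1)k-1\}$. This leaves $r-1$ choices for $x_i$ and $(rk)^{d-1}$ choices for the remaining coordinates. The count in direction $i$ is therefore at most $(r-1)(rk)^{d-1}\le r^dk^{d-1}\le L^d/k$, using $rk\le L$. Summing over the $d$ directions contributes $d/k$ after dividing by $L^d$.

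\emph{Edges touching $R_{L,k}$.} Any such edge has an endpoint $y\in R_{L,k}$, and each vertex is an endpoint of at most $2d$ free edges. It therefore suffices to bound $|R_{L,k}|$. A point lies in $R_{L,k}$ only if some coordinate $y_j\ge rk$. Since $L-rk\le k-1<k$, a union bound over $j$ gives $|R_{L,k}|\le d\,(L-rk)L^{d-1}\le d\,kL^{d-1}$. Hence these edges number at most $2d\cdot dkL^{d-1}=2d^2kL^{d-1}$, which after dividing by $L^d$ gives $2d^2k/L$.

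Adding the two contributions yields \eqref{eq:deleted-count}. The only step needing care is avoiding double counting between the classes and confirming that the crude factor $2d$ absorbs all edges touching $R_{L,k}$. Both are harmless, because the classes are disjoint by definition and overcounting only weakens the bound. No step is a genuine obstacle. The one point worth checking is the inter-block count when $r=1$: then there are no inter-block edges, and the bound $(r-1)(rk)^{d-1}=0$ is consistent with this.
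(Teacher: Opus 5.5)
Your proof is correct and takes the same route as the paper: you split the deleted edges into inter-block edges inside $C_{L,k}$, counted as $(r-1)(rk)^{d-1}$ per direction, and edges with an endpoint in $R_{L,k}$, bounded by $2d|R_{L,k}|\le 2d^2kL^{d-1}$. The only cosmetic difference is that you bound $|R_{L,k}|$ by a union bound over coordinates, while the paper writes $|R_{L,k}|=L^d-(rk)^d\le d(L-rk)L^{d-1}$; both give the same estimate.
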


\begin{proof}
The interfaces between adjacent blocks inside $C_{L,k}$ contribute
$d(r-1)(rk)^{d-1}$ edges. Every other deleted edge has at least one endpoint
in $R_{L,k}$, so there are at most $2d|R_{L,k}|$ of them. Since
\[
 |R_{L,k}|=L^d-(rk)^d
 \leq d(L-rk)L^{d-1}\leq dkL^{d-1},
\]
division by $L^d$ proves \eqref{eq:deleted-count}.
\end{proof}

The decoupled partition function is
\begin{equation}\label{eq:decoupled-partition}
	Z_{L,k}^{\mathrm{dec}}
	=2^{|R_{L,k}|}
	\prod_{z\in\{0,\ldots,r-1\}^d}Z_{B_z}^{\fr}.
\end{equation}

\begin{proposition}\label{prop:mean-free}
For every $\beta\geq0$, the finite limit
\[
 p_{\infty,\nu}(\beta)=\lim_{L\to\infty}p_{L,\nu}^{\fr}(\beta)
\]
exists, and
\begin{equation}\label{eq:free-rate}
	|p_{k,\nu}^{\fr}(\beta)-p_{\infty,\nu}(\beta)|
	\leq\frac{\beta d m_1(\nu)}{k}.
\end{equation}
\end{proposition}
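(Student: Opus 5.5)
The plan is to compare $\log Z_L^{\fr}$ with the decoupled partition function \eqref{eq:decoupled-partition}. Then we take expectations and let $L\to\infty$ with $k$ fixed. Both the existence of the limit and the rate \eqref{eq:free-rate} should come out of one two-sided inequality.

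First, Hamiltonian $H_L^{\fr}$ and the decoupled system differ only in the coefficients of the edges in $D_{L,k}$: those coefficients are set to zero. Applying \cref{lem:comparison} with $K'_e=0$ on $D_{L,k}$ gives
\[
 \bigl|\log Z_L^{\fr}(\beta,J)-\log Z_{L,k}^{\mathrm{dec}}\bigr|\leq\beta\sum_{e\in D_{L,k}}|J_e|.
\]
Next, $\log Z_{L,k}^{\mathrm{dec}}=|R_{L,k}|\log 2+\sum_z\log Z_{B_z}^{\fr}$. Each block $B_z$ carries an i.i.d. array with law $\nu$ on a translate of the free graph on $\Lam_k$, so $\E\log Z_{B_z}^{\fr}=k^d p_{k,\nu}^{\fr}(\beta)$. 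Everything is integrable by \eqref{eq:basic-bound}. Taking expectations, dividing by $L^d$, and using \cref{lem:deleted-count}, we get
\[
 \Bigl|p_{L,\nu}^{\fr}(\beta)-\Bigl(\tfrac{rk}{L}\Bigr)^{d}p_{k,\nu}^{\fr}(\beta)-\tfrac{|R_{L,k}|}{L^d}\log2\Bigr|
 \leq\beta m_1(\nu)\Bigl(\frac dk+\frac{2d^2k}{L}\Bigr).
\]

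Now fix $k$ and let $L\to\infty$. Since $rk/L\to1$ and $|R_{L,k}|/L^d\leq dk/L\to0$, and since $|p_{k,\nu}^{\fr}|\leq\log2+\beta d\,m_1(\nu)$ is finite by \eqref{eq:basic-bound}, both $\limsup_L p_{L,\nu}^{\fr}$ and $\liminf_L p_{L,\nu}^{\fr}$ lie in the interval $[p_{k,\nu}^{\fr}-\beta d m_1/k,\;p_{k,\nu}^{\fr}+\beta d m_1/k]$. Hence $\limsup-\liminf\leq 2\beta d m_1/k$ for every $k$. Letting $k\to\infty$ shows that the limit $p_{\infty,\nu}(\beta)$ exists, and it is finite because the sequence is uniformly bounded by \eqref{eq:basic-bound}. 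The same interval containment, now applied to the limit, gives \eqref{eq:free-rate} directly.

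There is no serious obstacle. The only points needing care are the following. The expectation of each block's log-partition function must equal $k^dp_{k,\nu}^{\fr}$, which uses that the block's internal edge set is exactly a translate of the free edge set of $\Lam_k$ and that the law is i.i.d. In addition, the boundedness of $p_{k,\nu}^{\fr}$ must be invoked so that the factor $(rk/L)^d\to1$ can be passed through.
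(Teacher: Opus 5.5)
Your proposal is correct and follows the paper's proof essentially step for step. Both arguments compare the full free system with the decoupled block system via \cref{lem:comparison}, take expectations using translation invariance and \cref{lem:deleted-count}, send $L\to\infty$ at fixed $k$ to trap $\liminf$ and $\limsup$ in an interval of half-width $\beta d m_1(\nu)/k$ around $p_{k,\nu}^{\fr}$, and then let $k\to\infty$, with the same interval giving \eqref{eq:free-rate}. Your remarks that $p_{k,\nu}^{\fr}$ is bounded and that each block carries a translate of the free edge set of $\Lam_k$ simply spell out what the paper compresses into ``translation invariance.''
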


\begin{proof}
Apply \cref{lem:comparison} to the full and decoupled systems, take
expectations, and divide by $L^d$. Translation invariance and
\eqref{eq:decoupled-partition} give
\begin{align}
	\left|p_{L,\nu}^{\fr}
	-\left(\frac{rk}{L}\right)^dp_{k,\nu}^{\fr}
	-\frac{|R_{L,k}|}{L^d}\log2\right|
	&\leq\beta m_1(\nu)\frac{|D_{L,k}|}{L^d}\notag\\
	&\leq\beta m_1(\nu)
	\left(\frac{d}{k}+\frac{2d^2k}{L}\right).
	\label{eq:mean-tiling}
\end{align}
For fixed $k$, send $L\to\infty$. Since $rk/L\to1$ and
$|R_{L,k}|/L^d\to0$,
\[
 p_{k,\nu}^{\fr}-\frac{\beta d m_1(\nu)}{k}
 \leq\liminf_Lp_{L,\nu}^{\fr}
 \leq\limsup_Lp_{L,\nu}^{\fr}
 \leq p_{k,\nu}^{\fr}+\frac{\beta d m_1(\nu)}{k}.
\]
The sequence is bounded by \eqref{eq:basic-bound}. Letting $k\to\infty$
shows that its limsup and liminf agree. Taking the large-volume limit in the
same display proves \eqref{eq:free-rate}.
\end{proof}
\noindent We next isolate the strong-law statements needed for the sample limit. Let
$Y_{x,i}=|J_{x,i}|$.

\begin{lemma}\label{lem:spatial-slln}
There is an event of probability one on which all the following hold:
\begin{enumerate}[label=\textup{(\roman*)}]
\item
\[
 \frac1{L^d}\sum_{i=1}^d\sum_{x\in\Lam_L}Y_{x,i}
 \longrightarrow dm_1(\nu);
\]
\item for every fixed $k\geq1$, direction $i$, and residue
$a\in\{0,\ldots,k-1\}$,
\[
 \frac1{L^d}
 \sum_{\substack{x\in\Lam_L\\x_i\equiv a\pmod k}}Y_{x,i}
 \longrightarrow\frac{m_1(\nu)}{k};
\]
\item for every fixed $h\geq1$, with $\Lam_{L-h}=\varnothing$ when $L\leq h$,
\[
 \frac1{L^d}\sum_{i=1}^d
 \sum_{x\in\Lam_L\setminus\Lam_{L-h}}Y_{x,i}
 \longrightarrow0.
\]
\end{enumerate}
\end{lemma}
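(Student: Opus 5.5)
Since there are only countably many statements (countably many $k$, $i$, $a$, $h$), it suffices to prove each one almost surely and then intersect the full-measure events. Each statement is a strong law of large numbers over the growing boxes $\Lam_L$, for i.i.d. integrable variables $Y_{x,i}\geq0$. The sums are not partial sums of a single sequence in the usual order, so I would reduce to a form where the classical Kolmogorov SLLN applies. The tool I would use is the multiparameter (Smythe) strong law for cubes, but a more self-contained route exists, which I now describe.

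\textbf{(i) and (ii).} I would write $\Lam_L=\{0,\dots,L-1\}^d$ as an increasing sequence of sets and enumerate $\Z_{\ge0}^d$ so that each $\Lam_L$ is an initial segment. For instance, order points by $\max_j x_j$ and then lexicographically. Then $\Lam_L$ consists of the first $L^d$ points. Fix a direction $i$ and let $(Y_n)$ be the enumeration of $(Y_{x,i})$, an i.i.d. integrable sequence. The ordinary SLLN gives $N^{-1}\sum_{n\le N}Y_n\to m_1(\nu)$ almost surely, and along the subsequence $N=L^d$ this is exactly (i) for direction $i$; summing over $i$ gives (i).

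For (ii) I would apply the same argument to the sub-box family $\{x\in\Lam_L:x_i\equiv a \bmod k\}$. This family is again increasing, and its cardinality is $L^{d-1}\lceil (L-a)/k\rceil$. Enumerating it with the same ordering makes each member an initial segment, so the SLLN gives convergence of the average over it to $m_1(\nu)$. Multiplying by the cardinality over $L^d$, which tends to $1/k$, gives (ii).

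\textbf{(iii).} I would write the boundary sum as a difference: $\sum_{\Lam_L}Y-\sum_{\Lam_{L-h}}Y$. Divided by $L^d$, the first term converges to $m_1$ by the above. The second term equals $\bigl((L-h)/L\bigr)^d$ times the average over $\Lam_{L-h}$, and that average also converges to $m_1$. The difference therefore tends to $0$. For $L\le h$ the convention $\Lam_{L-h}=\varnothing$ only matters for finitely many $L$.

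The only delicate point is checking that the chosen enumeration really makes every $\Lam_L$, and every residue sub-box, an initial segment. For the residue family I would instead simply use a separate enumeration for each of the countably many families; this is legitimate since the SLLN holds for each enumeration separately. So no real obstacle remains beyond this bookkeeping.
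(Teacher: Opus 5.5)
Your proposal is correct and follows essentially the same route as the paper. Both arguments enumerate $\Z_{\geq0}^d$ shell by shell so that each $\Lam_L$ and each residue sub-box is an initial segment. Both then apply the ordinary strong law along $N=L^d$ (or along the residue-class cardinalities, whose ratio to $L^d$ tends to $1/k$), obtain (iii) from the difference $A_L-A_{L-h}$ divided by $L^d$, and intersect countably many probability-one events.
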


\begin{proof}
For each direction, enumerate $\Z_{\geq0}^d$ shell by shell so that the first
$L^d$ sites are $\Lam_L$. The ordinary strong law gives (i). Enumerating a
fixed residue class in the same order proves (ii), because the number of its
sites in $\Lam_L$, divided by $L^d$, tends to $1/k$. For (iii), let
$A_L=\sum_{i,x\in\Lam_L}Y_{x,i}$. Part (i) gives
\[
 \frac{A_L-A_{L-h}}{L^d}
 =\frac{A_L}{L^d}
 -\left(\frac{L-h}{L}\right)^d
 \frac{A_{L-h}}{(L-h)^d}
 \longrightarrow0.
\]
There are only countably many choices of $k,i,a,h$, so the corresponding
probability-one events may be intersected.
\end{proof}

\begin{proposition}\label{prop:sample-free}
For every fixed $\beta\geq0$,
\[
 \cP_L^{\fr}(\beta,J)\stackrel{a.s.}{\longrightarrow} p_{\infty,\nu}(\beta)
\]
\end{proposition}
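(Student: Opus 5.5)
We prove almost-sure convergence by the same block decoupling as in \cref{prop:mean-free}, but now pathwise. The key point is that the pressures of the disjoint blocks form i.i.d.\ families. Fix $k\geq1$ and $L\geq k$, and compare the free system on $\Lam_L$ with the decoupled system \eqref{eq:decoupled-partition}. \Cref{lem:comparison} gives, deterministically,
\[
 \Bigl|\cP_L^{\fr}-\frac{1}{L^d}\sum_{z}\log Z_{B_z}^{\fr}-\frac{|R_{L,k}|}{L^d}\log 2\Bigr|
 \leq\frac{\beta}{L^d}\sum_{e\in D_{L,k}}|J_e|.
\]
The plan is to show that, on a single probability-one event, the right side has limsup at most $\beta d m_1(\nu)/k$. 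We also show that the block average converges to $p_{k,\nu}^{\fr}$. Then letting $k\to\infty$ and invoking \eqref{eq:free-rate} yields the claim.

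\emph{Step 1: the block average.} The blocks $B_z=kz+\Lam_k$ for $z\in\Z_{\geq0}^d$ use disjoint edge sets, since free blocks contain only internal edges. Hence the variables $X_z=\log Z_{B_z}^{\fr}/k^d$ are i.i.d.\ with mean $p_{k,\nu}^{\fr}$. They are integrable by \eqref{eq:basic-bound}. The relevant family for volume $L$ is $\{z\in\{0,\ldots,r-1\}^d\}$, where $r=\lfloor L/k\rfloor$, and these index sets are nested cubes. Enumerate $\Z_{\geq0}^d$ shell by shell as in \cref{lem:spatial-slln}. The strong law then gives
\[
 r^{-d}\sum_{z}X_z\to p_{k,\nu}^{\fr}
\]
almost surely along $r\to\infty$. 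Since $(rk)^d/L^d\to1$ and $|R_{L,k}|/L^d\to0$, the block term converges to $p_{k,\nu}^{\fr}$ almost surely, for each fixed $k$. Countably many $k$ can be handled simultaneously.

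\emph{Step 2: the deleted edges.} This is the step needing care. The averaged bound $|D_{L,k}|/L^d\le d/k+2d^2k/L$ is not enough; we need the sum of $|J_e|$ over $D_{L,k}$, controlled almost surely. Split $D_{L,k}$ into two parts.

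First, the interface edges inside $C_{L,k}$ are edges $(x,i)$ with $x_i\equiv k-1\pmod k$. Their total is at most the residue-class sum in \cref{lem:spatial-slln}(ii) with $a=k-1$, summed over $i$. Normalised by $L^d$, this tends to $d\,m_1(\nu)/k$.

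Second, the remaining deleted edges have an endpoint in $R_{L,k}\subseteq\Lam_L\setminus\Lam_{rk}$. Here $L-rk<k$, so we bound them by edges $(x,i)$ with $x\in\Lam_L\setminus\Lam_{L-k-1}$. We must also include edges whose endpoint $x+\ebf_i$ lies in $R_{L,k}$; these have $x$ in the layer $\Lam_L\setminus\Lam_{L-k-1}$ as well. \Cref{lem:spatial-slln}(iii) with $h=k+1$ sends this contribution to zero.

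Hence, on the intersection of the events in \cref{lem:spatial-slln} and Step 1,
\[
 p_{k,\nu}^{\fr}-\frac{\beta d m_1(\nu)}{k}
 \leq\liminf_L\cP_L^{\fr}
 \leq\limsup_L\cP_L^{\fr}
 \leq p_{k,\nu}^{\fr}+\frac{\beta d m_1(\nu)}{k}.
\]

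\emph{Step 3: conclusion.} Let $k\to\infty$ and use \eqref{eq:free-rate}; both bounds tend to $p_{\infty,\nu}(\beta)$.

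The main obstacle is the geometric bookkeeping in Step 2: verifying that every deleted edge falls either in a residue class $x_i\equiv k-1$ or in a boundary layer of fixed width. The latter must be uniform in $L$, which holds because $L-rk<k$.
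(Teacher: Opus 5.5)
Your proposal is correct and follows essentially the same route as the paper: pathwise block decoupling, the shell-ordered strong law for the i.i.d.\ block pressures, and control of the deleted edges through the residue class $x_i\equiv k-1\pmod k$ and a fixed-width boundary layer via \cref{lem:spatial-slln}, followed by $k\to\infty$ using \eqref{eq:free-rate}. The only differences are cosmetic: you normalize the block variables by $k^d$ and use a layer of width $k+1$ instead of $k$, and both choices are harmless.
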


\begin{proof}
Fix $k$. The block variables
$X_z^{(k)}=\log Z_{B_z}^{\fr}$ are i.i.d. and integrable, because distinct blocks
use disjoint internal edges. The strong law, applied in shell order, gives
\begin{equation}\label{eq:block-slln}
	\frac1{r^d}\sum_{z\in\{0,\ldots,r-1\}^d}X_z^{(k)}
	\stackrel{a.s.}{\longrightarrow} k^dp_{k,\nu}^{\fr}.
\end{equation}
Since $r^d/L^d\to k^{-d}$, the same block sum divided by $L^d$ converges to
$p_{k,\nu}^{\fr}$. The deleted edges inside $C_{L,k}$ have origins in the
residue class $x_i\equiv k-1\pmod k$ in each direction; every remaining
deleted edge has an origin in a shell of width $k$. Hence
\cref{lem:spatial-slln} gives
\begin{equation}\label{eq:deleted-slln}
	\limsup_{L\to\infty}\frac1{L^d}
	\sum_{e\in D_{L,k}}|J_e|
	\leq\frac{dm_1(\nu)}{k}.
\end{equation}
The pointwise coefficient comparison, \eqref{eq:decoupled-partition},
\eqref{eq:block-slln}, and \eqref{eq:deleted-slln} now imply
\[
	\limsup_{L\to\infty}
	|\cP_L^{\fr}-p_{k,\nu}^{\fr}|
	\leq\frac{\beta dm_1(\nu)}{k}.
\]
Intersect over $k\in\mathbb N$, use \eqref{eq:free-rate}, and let
$k\to\infty$.
\end{proof}

\begin{theorem}\label{thm:i.i.d.}
If $m_1(\nu)<\infty$, then for every $\beta\geq0$ there exists a finite
deterministic limit $p_{\infty,\nu}(\beta)$ such that
\[
	p_{L,\nu}^{\fr}(\beta),\ p_{L,\nu}^{\per}(\beta)
	\longrightarrow p_{\infty,\nu}(\beta).
\]
For $L\geq3$,
\begin{equation}\label{eq:i.i.d.-rates}
	|p_{L,\nu}^{\fr}-p_{\infty,\nu}|
	\leq\frac{\beta d m_1(\nu)}{L},\qquad
	|p_{L,\nu}^{\per}-p_{\infty,\nu}|
	\leq\frac{2\beta d m_1(\nu)}{L}.
\end{equation}
Equivalently, for $\beta>0$ and $f=-p/\beta$, the corresponding free-energy
density errors are bounded by $dm_1(\nu)/L$ and $2dm_1(\nu)/L$.
On the canonical disorder space,
\[
	\cP_L^{\fr}(\beta,J),\ \cP_L^{\per}(\beta,J)
	\stackrel{a.s.}{\longrightarrow} p_{\infty,\nu}(\beta).
\]
There is one probability-one event on which both convergences hold locally
uniformly for $\beta\in[0,\infty)$.
\end{theorem}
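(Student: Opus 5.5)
Most of \cref{thm:i.i.d.} is already contained in \cref{prop:mean-free} and \cref{prop:sample-free}. What remains is to compare periodic and free boundary conditions, to obtain the rates \eqref{eq:i.i.d.-rates} at every fixed $L\ge3$, and to upgrade to a single almost-sure event that works locally uniformly in $\beta$.

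\textbf{Rates and the periodic case.} The free rate is \eqref{eq:free-rate} with $k=L$. For the periodic case I will first compare the two Hamiltonians. For $L\ge3$, $H_L^{\per}$ is $H_L^{\fr}$ plus the wrap-around edges $(x,i)$ with $x_i=L-1$; there are $dL^{d-1}$ of them, and every edge appears once because the periodic graph is simple. \cref{lem:comparison} then gives, pointwise,
\[
 |\cP_L^{\per}(\beta,J)-\cP_L^{\fr}(\beta,J)|\le \frac{\beta}{L^d}\sum_{i=1}^d\sum_{x_i=L-1}|J_{x,i}|.
\]
Taking expectations bounds the difference of means by $\beta d m_1(\nu)/L$. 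Adding the free rate gives the constant $2\beta d m_1(\nu)/L$, and dividing by $\beta$ gives the free-energy statements.

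\textbf{Almost-sure periodic convergence.} Pathwise, the boundary edges in the display above have origins in $\Lam_L\setminus\Lam_{L-1}$. \cref{lem:spatial-slln}(iii) with $h=1$ therefore sends the right-hand side to $0$ almost surely. Combined with \cref{prop:sample-free}, this gives almost-sure periodic convergence.

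\textbf{Uniformity in $\beta$.} This is the step I expect to need the most care. I will use that $\beta\mapsto\cP_L^b(\beta,J)$ is convex, being a log-sum-exp of functions linear in $\beta$. It is also Lipschitz with constant $L^{-d}\sum_{e}|J_e|$ over the relevant edges, since its derivative is a Gibbs average of $-H/L^d$. By \cref{lem:spatial-slln}(i), this Lipschitz constant is eventually at most $2d m_1(\nu)+1$ on the fixed probability-one event.

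Next I intersect the almost-sure events of \cref{prop:sample-free} and its periodic analogue over all rational $\beta\ge0$. On this countable intersection, convergence holds at every rational $\beta$. The limit $p_{\infty,\nu}$ is Lipschitz with the same constant, as a pointwise limit of uniformly Lipschitz functions, or directly from \cref{lem:comparison}. A uniformly equicontinuous sequence converging on a dense set converges uniformly on compacts (Arzel\`a--Ascoli, or a direct $\varepsilon/3$ argument on a finite rational grid of $[0,B]$). This gives one probability-one event for both boundary conditions and all $\beta$.

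The only subtle point is that the null sets must not depend on $\beta$. This is handled by the countable intersection together with the pathwise uniform Lipschitz bound.
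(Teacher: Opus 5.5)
Your proposal is correct and follows essentially the paper's own proof: the $dL^{d-1}$ wrap-around edges and \cref{lem:comparison} give both the $2\beta d m_1(\nu)/L$ rate and, via \cref{lem:spatial-slln}(iii), the periodic sample limit. Uniformity in $\beta$ comes, as in the paper, from intersecting the almost-sure events over rational $\beta$ and using the pathwise Lipschitz constant $L^{-d}\sum_e|J_e|\to dm_1(\nu)$ with a finite rational net on compacts; the convexity remark is harmless but not needed.
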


\begin{proof}
The free mean and sample assertions are
\cref{prop:mean-free,prop:sample-free}. The periodic system differs from the
free one by the wrap-around set
\[
	S_L=\{(x,i):x\in\Lam_L,\ x_i=L-1,\ 1\leq i\leq d\},
	\qquad |S_L|=dL^{d-1}.
\]
Consequently,
\begin{equation}\label{eq:seam-bound}
	|\cP_L^{\per}-\cP_L^{\fr}|
	\leq\frac{\beta}{L^d}\sum_{(x,i)\in S_L}|J_{x,i}|.
\end{equation}
The expectation is at most $\beta dm_1(\nu)/L$, which combines with
\eqref{eq:free-rate} to prove \eqref{eq:i.i.d.-rates}. Every origin in $S_L$
lies in $\Lam_L\setminus\Lam_{L-1}$, so the right side of
\eqref{eq:seam-bound} tends to zero almost surely by
\cref{lem:spatial-slln}. This proves the periodic sample limit.

For either boundary condition and $\beta,\beta'\geq0$,
\[
	|\cP_L^b(\beta,J)-\cP_L^b(\beta',J)|
	\leq|\beta-\beta'|\frac1{L^d}
	\sum_{i=1}^d\sum_{x\in\Lam_L}|J_{x,i}|.
\]
The random Lipschitz constants converge almost surely to $dm_1(\nu)$ by
\cref{lem:spatial-slln}; the deterministic pressures and their limit are
$dm_1(\nu)$-Lipschitz. Intersect the sample-limit events over nonnegative
rational $\beta$ and use a finite rational net on each compact interval.
\end{proof}

\begin{remark}\label{rem:no-symmetry}
No centering, symmetry, variance, or exponential moment is used in
\cref{thm:i.i.d.}. The first absolute moment makes the block pressures
integrable and the deleted and boundary-edge sums negligible per site.
\end{remark}

\section{Wasserstein Stability}\label{sec:transport}

Let $\mathcal P_1(\R)$ be the probability laws with finite first moment. For $\nu,\mu\in\mathcal P_1(\R)$, define
\[
	W_1(\nu,\mu)=\inf_{\pi\in\Pi(\nu,\mu)}
	\int_{\R^2}|x-y|\,\pi(\d x,\d y).
\]
All random pressure laws below also belong to $\mathcal P_1(\R)$ by \eqref{eq:basic-bound}.

\begin{theorem}\label{thm:wasserstein}
For $b\in\{\fr,\per\}$ and every admissible $L$,
\begin{align}
	|p_{L,\nu}^b(\beta)-p_{L,\mu}^b(\beta)|
	&\leq\beta\frac{|E_L^b|}{L^d}W_1(\nu,\mu)
	\leq\beta dW_1(\nu,\mu),\label{eq:mean-w1}\\
	W_1\!\left(\operatorname{Law}(\cP_{L,\nu}^b),
	\operatorname{Law}(\cP_{L,\mu}^b)\right)
	&\leq\beta dW_1(\nu,\mu).\label{eq:law-w1}
\end{align}
Consequently,
\begin{equation}\label{eq:limit-w1}
	|p_{\infty,\nu}(\beta)-p_{\infty,\mu}(\beta)|
	\leq\beta dW_1(\nu,\mu).
\end{equation}
\end{theorem}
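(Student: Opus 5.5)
The plan is a coupling argument built on \cref{lem:comparison}. On $\R$ an optimal coupling $\pi^*\in\Pi(\nu,\mu)$ exists; the quantile coupling $(F_\nu^{-1}(U),F_\mu^{-1}(U))$ with $U$ uniform on $(0,1)$ attains $W_1$. Take i.i.d. pairs $(J_e,J'_e)\sim\pi^*$ indexed by the edge set $E_L^b$. Here $E_L^b$ is the free or periodic edge set, with $|E_L^{\per}|=dL^d$ and $|E_L^{\fr}|=dL^{d-1}(L-1)\le dL^d$. Then $J$ has the i.i.d. $\nu$ law and $J'$ has the i.i.d. $\mu$ law. Consequently $\cP_L^b(\beta,J)$ has the law of $\cP_{L,\nu}^b$, and $\cP_L^b(\beta,J')$ has the law of $\cP_{L,\mu}^b$.

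Applying \cref{lem:comparison} with $S_e(\sigma)=\sigma_x\sigma_{x+\ebf_i}\in[-1,1]$ gives the pointwise bound
\[
 |\cP_L^b(\beta,J)-\cP_L^b(\beta,J')|\le \frac{\beta}{L^d}\sum_{e\in E_L^b}|J_e-J'_e|.
\]
Both pressures are integrable by \eqref{eq:basic-bound}, so we may take expectations. The left side then dominates $|p_{L,\nu}^b-p_{L,\mu}^b|$. The right side has expectation $\beta|E_L^b|W_1(\nu,\mu)/L^d\le\beta dW_1(\nu,\mu)$, which proves \eqref{eq:mean-w1}.

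For \eqref{eq:law-w1}, note that the pair $(\cP_L^b(\beta,J),\cP_L^b(\beta,J'))$ is itself a coupling of the two pressure laws. Hence the $W_1$ distance between those laws is at most $\E|\cP_L^b(\beta,J)-\cP_L^b(\beta,J')|$, and the same bound applies.

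For \eqref{eq:limit-w1}, use \cref{thm:i.i.d.}: both $p_{L,\nu}^{\fr}(\beta)$ and $p_{L,\mu}^{\fr}(\beta)$ converge to the finite limits $p_{\infty,\nu}(\beta)$ and $p_{\infty,\mu}(\beta)$, since both laws lie in $\mathcal P_1(\R)$. Letting $L\to\infty$ in the $L$-uniform bound \eqref{eq:mean-w1} gives the claim.

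The only point needing care is that the infimum in $W_1$ is attained, so that the bound holds exactly rather than up to $\varepsilon$. If one prefers not to invoke existence of optimal couplings, take an $\varepsilon$-optimal $\pi$ and let $\varepsilon\downarrow0$ at the end; this works because every bound is linear in $\int|x-y|\,d\pi$. With either route no step is a genuine obstacle, since the estimate is deterministic once the coupling is fixed.
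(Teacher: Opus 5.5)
Your proposal is correct and follows the paper's proof essentially step for step. It uses a common-quantile optimal coupling copied independently over the edges, the pointwise bound from \cref{lem:comparison}, the same construction read as a coupling of the two pressure laws, and passage to the limit via \cref{thm:i.i.d.}; the explicit count $|E_L^{\fr}|=d(L-1)L^{d-1}$ and the $\varepsilon$-optimal fallback are harmless additions.
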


\begin{proof}
Take an optimal coupling $(X,Y)$ of $\nu$ and $\mu$; on the real line it may be realized by common quantiles \cite[Chapter~2]{Villani2009}. Use independent copies $(X_e,Y_e)$ over the finite edge set. By \cref{lem:comparison},
\[
 |\cP_L^b(X)-\cP_L^b(Y)|
 \leq\frac{\beta}{L^d}\sum_{e\in E_L^b}|X_e-Y_e|.
\]
Expectation gives \eqref{eq:mean-w1}. The same construction is an admissible coupling of the two random pressures, proving \eqref{eq:law-w1}. Passing to the limits supplied by \cref{thm:i.i.d.} gives \eqref{eq:limit-w1}.
\end{proof}

The underlying deterministic consequence will be used repeatedly: if two disorder fields on a common space satisfy
\[
 \frac{1}{L^d}\sum_{x\in\Lam_L}\sum_{i=1}^d
 |J_{x,i}-K_{x,i}|\longrightarrow0
\]
in probability or almost surely, then their free and periodic pressures have the same asymptotic limit in the corresponding mode of convergence.

\section{Asymptotically homogeneous disorder}\label{sec:inhomogeneous}

For each positive directed edge $e=(x,i)$, let $\nu_e\in\mathcal P_1(\R)$, and suppose the edge couplings are spatially independent with these laws. Fix a reference law $\nu\in\mathcal P_1(\R)$ and define
\begin{equation}\label{eq:delta-L}
	\delta_L(\nu_\bullet,\nu)
	=\frac{1}{L^d}\sum_{x\in\Lam_L}\sum_{i=1}^d
	W_1(\nu_{x,i},\nu).
\end{equation}
Let $p_{L,\nu_\bullet}^b=\E\cP_L^b$ denote the corresponding quenched pressure.

For a distribution function $F$, write
$F^{-1}(u)=\inf\{t\in\R:F(t)\geq u\}$. Let $\{U_e\}$ be i.i.d. uniform variables and define the common quantile coupling
\begin{equation}\label{eq:quantile-coupling}
	J_e=F_e^{-1}(U_e),\qquad K_e=F^{-1}(U_e).
\end{equation}
Then the $J_e$ are independent with laws $\nu_e$, the $K_e$ are i.i.d. with law $\nu$, and
$\E|J_e-K_e|=W_1(\nu_e,\nu)$.

\begin{theorem}\label{thm:inhomogeneous}
Suppose $\delta_L(\nu_\bullet,\nu)\to0$. Then
\begin{align}
	|p_{L,\nu_\bullet}^{\fr}(\beta)-p_{\infty,\nu}(\beta)|
	&\leq\beta\delta_L(\nu_\bullet,\nu)
	+\frac{\beta d m_1(\nu)}{L},\label{eq:inhom-free}\\
	|p_{L,\nu_\bullet}^{\per}(\beta)-p_{\infty,\nu}(\beta)|
	&\leq\beta\delta_L(\nu_\bullet,\nu)
	+\frac{2\beta d m_1(\nu)}{L}.
	\label{eq:inhom-periodic}
\end{align}
The random free and periodic pressures converge in probability to
$p_{\infty,\nu}(\beta)$.

Enumerate the directed edges shell by shell so that the first $dL^d$ edges are $\Lam_L\times\{1,\ldots,d\}$, and let $D_e=|J_e-K_e|$ in \eqref{eq:quantile-coupling}. If
\begin{equation}\label{eq:kolmogorov-condition}
	\sum_{n=1}^{\infty}\frac{\operatorname{Var}(D_{e_n})}{n^2}<\infty,
\end{equation}
then both pressures converge almost surely on the common quantile coupling. In particular, \eqref{eq:kolmogorov-condition} follows from
$\sup_e\E D_e^2<\infty$.
\end{theorem}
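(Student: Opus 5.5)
The plan is to work entirely on the common quantile coupling \eqref{eq:quantile-coupling}. There $J$ has the inhomogeneous law and $K$ is an i.i.d.\ $\nu$ field. Everything then reduces, via the pointwise comparison \cref{lem:comparison}, to controlling the averaged discrepancy
\[
 A_L=\frac{1}{L^d}\sum_{x\in\Lam_L}\sum_{i=1}^d D_{x,i},\qquad D_{x,i}=|J_{x,i}-K_{x,i}|.
\]
For either boundary condition, the edge set $E_L^b$ is contained in $\Lam_L\times\{1,\dots,d\}$: the free edges are a subset, and the periodic edges are exactly these origins. Hence \cref{lem:comparison} gives $|\cP_L^b(\beta,J)-\cP_L^b(\beta,K)|\leq\beta A_L$ pointwise. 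Moreover $\E A_L=\delta_L(\nu_\bullet,\nu)$, since the quantile coupling is optimal on each edge.

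\textbf{Mean bounds.} Taking expectations gives $|p^b_{L,\nu_\bullet}-p^b_{L,\nu}|\leq\beta\delta_L$. The triangle inequality with the rates \eqref{eq:i.i.d.-rates} of \cref{thm:i.i.d.} then yields \eqref{eq:inhom-free} and \eqref{eq:inhom-periodic}; the free rate also holds for $L<3$ via \eqref{eq:free-rate}. Note that $m_1(\nu)<\infty$ because $\nu\in\mathcal P_1(\R)$.

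\textbf{Convergence in probability.} Since $A_L\geq0$ and $\E A_L=\delta_L\to0$, Markov's inequality gives $A_L\to0$ in probability. \Cref{thm:i.i.d.} gives $\cP_L^b(\beta,K)\to p_{\infty,\nu}$ almost surely on the $K$ field. Combining these with the pointwise bound gives convergence in probability of $\cP_L^b(\beta,J)$. This statement concerns only the law of $\cP_L^b(\beta,J)$, so it holds for any realization of the independent field, not just the quantile one.

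\textbf{Almost sure convergence.} Enumerate the edges in shell order, so that $A_L=\frac{1}{L^d}\sum_{n\leq dL^d}D_{e_n}$. Write $D_{e_n}=\E D_{e_n}+(D_{e_n}-\E D_{e_n})$. The $D_{e_n}$ are independent, because they are functions of distinct $U_e$. Under \eqref{eq:kolmogorov-condition}, Kolmogorov's strong law for independent non-identically distributed variables, via the Kronecker lemma, gives $\frac1N\sum_{n\leq N}(D_{e_n}-\E D_{e_n})\to0$ almost surely. Along $N=dL^d$, the centered part of $A_L$ therefore tends to zero almost surely, while the mean part is $\delta_L\to0$. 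Hence $A_L\to0$ almost surely, and the pointwise bound together with the almost sure limit for $K$ from \cref{thm:i.i.d.} gives almost sure convergence for both boundary conditions. Finally, $\operatorname{Var}(D_e)\leq\E D_e^2\leq\sup_e\E D_e^2$, so $\sum_n\operatorname{Var}(D_{e_n})/n^2$ converges.

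\textbf{Main obstacle.} The only delicate step is the Kolmogorov/Kronecker strong law for non-identically distributed summands, including the need for the shell enumeration, so that the volume sums are exactly partial sums of a single sequence. The rest is the deterministic comparison already used in \cref{thm:wasserstein}.
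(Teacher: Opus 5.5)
Your proposal is correct and follows essentially the same route as the paper. On the common quantile coupling, the pointwise comparison bounds both pressure differences by your averaged discrepancy $\beta A_L$, with $\E A_L=\delta_L$. The three conclusions then follow exactly as in the paper: the mean bounds from \eqref{eq:i.i.d.-rates}, convergence in probability from Markov's inequality plus the canonical limit for $K$, and almost sure convergence from Kolmogorov's strong law along $N=dL^d$ in shell order. Your remark that the free bound for $L<3$ comes from \eqref{eq:free-rate} is a small point of care that the paper leaves implicit.
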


\begin{proof}
Apply \cref{lem:comparison} to \eqref{eq:quantile-coupling}. The periodic edge set is exactly $\Lam_L\times\{1,\ldots,d\}$, and the free edge set is contained in it, so
\begin{equation}\label{eq:inhom-coupling-bound}
	\E|\cP_L^b(J)-\cP_L^b(K)|
	\leq\beta\delta_L(\nu_\bullet,\nu).
\end{equation}
Combining this with \eqref{eq:i.i.d.-rates} proves
\eqref{eq:inhom-free}--\eqref{eq:inhom-periodic}. Markov's inequality makes the difference in \eqref{eq:inhom-coupling-bound} tend to zero in probability, while $\cP_L^b(K)$ converges almost surely by \cref{thm:i.i.d.}. This proves convergence in probability. Since it is a statement about the one-volume laws, it holds for every joint construction across volumes having those marginals.

Under \eqref{eq:kolmogorov-condition}, Kolmogorov's strong law for independent non-identically distributed variables \cite[Chapter~2]{Durrett2019} gives
\[
	\frac{1}{N}\sum_{n=1}^N(D_{e_n}-\E D_{e_n})\stackrel{a.s.}{\longrightarrow}.
\]
Taking $N=dL^d$, multiplying by $d$, and using \eqref{eq:delta-L} yields
\[
	\frac{1}{L^d}\sum_{x\in\Lam_L}\sum_{i=1}^dD_{x,i} \stackrel{a.s.}{\longrightarrow}0.
\]
The deterministic comparison and the canonical i.i.d. limit complete the proof.
\end{proof}

\begin{corollary}\label{cor:sparse}
Suppose there are deterministic sets
$A_L\subset\Lam_L\times\{1,\ldots,d\}$ such that
\[
 \{e\in\Lam_L\times\{1,\ldots,d\}:\nu_e\ne\nu\}\subseteq A_L,
 \qquad |A_L|=o(L^d),
\]
and
\[
	M:=\sup_{e\in\Z_{\geq0}^d\times\{1,\ldots,d\}} W_1(\nu_e,\nu)<\infty.
\]
Then the quenched-mean bounds and the convergence-in-probability conclusion
of \cref{thm:inhomogeneous} hold. No nesting of the sets $A_L$ is required.
The almost-sure conclusion is not asserted unless
\eqref{eq:kolmogorov-condition} is also satisfied.
\end{corollary}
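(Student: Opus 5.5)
The plan is to reduce \cref{cor:sparse} directly to \cref{thm:inhomogeneous}. The key step is to verify the single hypothesis $\delta_L(\nu_\bullet,\nu)\to0$. Since $W_1(\nu_e,\nu)=0$ whenever $\nu_e=\nu$, only edges in $A_L$ contribute to the sum in \eqref{eq:delta-L}. Each such edge contributes at most $M$, so
\[
 \delta_L(\nu_\bullet,\nu)=\frac1{L^d}\sum_{e\in A_L}W_1(\nu_e,\nu)\leq\frac{M|A_L|}{L^d}\longrightarrow0 .
\]
This gives the quenched-mean bounds \eqref{eq:inhom-free}--\eqref{eq:inhom-periodic} with the explicit rate $\beta M|A_L|/L^d+O(\beta d m_1(\nu)/L)$.

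Before invoking the theorem I will check one point. Each $\nu_e$ lies in $\mathcal P_1(\R)$, because $m_1(\nu_e)\leq m_1(\nu)+W_1(\nu_e,\nu)\leq m_1(\nu)+M$. All pressures are therefore integrable by \eqref{eq:basic-bound}, and the quantile coupling \eqref{eq:quantile-coupling} is legitimate.

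Convergence in probability then follows exactly as in the proof of \cref{thm:inhomogeneous}. The coupling bound \eqref{eq:inhom-coupling-bound} together with Markov's inequality shows that $\cP_L^b(J)-\cP_L^b(K)\to0$ in probability. Meanwhile $\cP_L^b(K)\to p_{\infty,\nu}(\beta)$ almost surely by \cref{thm:i.i.d.}.

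The statement that no nesting of the $A_L$ is needed should be checked separately. It holds because $\delta_L$ is computed volume by volume, and the conclusion concerns only one-volume marginals. The argument never uses any relation between $A_L$ and $A_{L'}$.

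The main subtlety is to explain why the almost-sure statement is not claimed. Without a condition such as \eqref{eq:kolmogorov-condition}, the variables $D_e$ on the sparse set could have heavy tails, and their sum over $A_L$ need not be $o(L^d)$ almost surely even though its mean is. I would simply note this and refer to the almost-sure clause of \cref{thm:inhomogeneous} as the case in which it is added as a hypothesis.
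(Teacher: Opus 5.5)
Your proposal is correct and matches the paper's proof. Both arguments note that only edges in $C_L\subseteq A_L$ contribute to $\delta_L$, bound it by $M|A_L|/L^d\to0$, and invoke \cref{thm:inhomogeneous}; your integrability check on $\nu_e$ is harmless but redundant, since \cref{sec:inhomogeneous} already assumes $\nu_e\in\mathcal P_1(\R)$.
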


\begin{proof}
Let
\[
	C_L=\{e\in\Lam_L\times\{1,\ldots,d\}:\nu_e\ne\nu\}.
\]
For $e\notin C_L$, one has $W_1(\nu_e,\nu)=0$. Since $C_L\subseteq A_L$,
\[
	\delta_L(\nu_\bullet,\nu)=\frac{1}{L^d}\sum_{e\in C_L}W_1(\nu_e,\nu)\leq M\frac{|A_L|}{L^d}\longrightarrow 0.
\]
The result follows from \cref{thm:inhomogeneous}.
\end{proof}

\section{Joint couplings across volumes}\label{sec:volume-couplings}

Fix $\nu\in\mathcal P_1(\R)$, $\beta\geq0$, and let $Q_L$ be the law of the i.i.d. periodic pressure $\cP_L^{\per}(\beta,J)$. By \cref{thm:i.i.d.},
\begin{equation}\label{eq:marginal-convergence}
	Q_L\bigl(\{x:|x-p_{\infty,\nu}(\beta)|>t\}\bigr)\longrightarrow0
	\qquad(t>0).
\end{equation}
We say these pressures converge completely if the probabilities in
\eqref{eq:marginal-convergence} are summable in $L$ for every $t>0$.

\begin{theorem}[Universal coupling criterion]\label{thm:universal-coupling}
The following are equivalent.
\begin{enumerate}[label=\textup{(\roman*)}]
\item For every joint coupling of the finite-volume i.i.d. disorder arrays,
$\cP_L^{\per}\to p_{\infty,\nu}(\beta)$ almost surely.
\item The periodic pressure laws converge completely:
\[
	\sum_{L=3}^{\infty}Q_L\bigl(\{x:|x-p_{\infty,\nu}(\beta)|>t\}\bigr)<\infty	\qquad(t>0).
\]
\end{enumerate}
\end{theorem}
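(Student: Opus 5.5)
The plan is to treat the statement as a purely probabilistic fact about sequences of random variables with prescribed marginals. Write $p=p_{\infty,\nu}(\beta)$ and $a_L(t)=Q_L(\{x:|x-p|>t\})$. A joint coupling of the finite-volume arrays is any probability space carrying arrays $J^{(L)}$, $L\geq3$, each i.i.d. with law $\nu$, with arbitrary dependence across $L$. Under such a coupling, $\cP_L^{\per}=\cP_L^{\per}(\beta,J^{(L)})$ has law $Q_L$. Almost-sure convergence is equivalent to $\Pp(|\cP_L^{\per}-p|>t\text{ i.o.})=0$ for every rational $t>0$, and both implications will be proved through this event.

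For (ii)$\Rightarrow$(i), fix any coupling and any $t>0$. Then $\sum_L\Pp(|\cP_L^{\per}-p|>t)=\sum_La_L(t)<\infty$, because the summands depend only on the one-volume marginals. The first Borel--Cantelli lemma, which needs no independence, gives $\Pp(|\cP_L^{\per}-p|>t\text{ i.o.})=0$. Intersecting over rational $t$ yields almost-sure convergence. This direction is routine.

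For (i)$\Rightarrow$(ii), argue by contraposition. Suppose $\sum_La_{L}(t_0)=\infty$ for some $t_0>0$. Choose the \emph{independent} coupling: take the arrays $J^{(L)}$ mutually independent across $L$, realized on a product space. Each is i.i.d. with law $\nu$, so this is an admissible joint coupling. The events $A_L=\{|\cP_L^{\per}(\beta,J^{(L)})-p|>t_0\}$ are then independent with $\Pp(A_L)=a_L(t_0)$, and $\sum_L\Pp(A_L)=\infty$. The second Borel--Cantelli lemma gives $\Pp(A_L\text{ i.o.})=1$, so $\cP_L^{\per}\not\to p$ almost surely, contradicting (i).

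The only point needing care is the existence of the independent coupling. This is a countable product of the laws $\nu^{\otimes(\Lam_L\times\{1,\dots,d\})}$, which exists by Kolmogorov's extension theorem, or more simply from i.i.d.\ uniforms via the quantile map $F^{-1}$ as in \eqref{eq:quantile-coupling}. One should also check measurability of $\cP_L^{\per}$, which is immediate because it is a continuous function of finitely many couplings. No analytic input beyond the definition of $Q_L$ is used, and the limit $p$ in (i) is fixed by \cref{thm:i.i.d.}. So I expect no genuine obstacle. The substantive difficulty is deferred to verifying (ii) under moment hypotheses in the next theorem.
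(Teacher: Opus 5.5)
Your proposal is correct and follows essentially the same route as the paper's proof. You use the first Borel--Cantelli lemma under an arbitrary coupling, with an intersection over rational $t$, for (ii)$\Rightarrow$(i), and the coupling independent across volumes together with the second Borel--Cantelli lemma for the converse. Your remarks on the existence of the product coupling and on the measurability of $\cP_L^{\per}$ fill in details the paper leaves implicit.
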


\begin{proof}
If (ii) holds, the first Borel--Cantelli lemma applies under every joint coupling; no independence across volumes is required. Intersecting the resulting probability-one events over positive rational $t$ proves (i).

If (ii) fails for some $t$, take the disorder arrays independently for different $L$. The corresponding deviation events are independent and have a divergent probability sum. The second Borel--Cantelli lemma implies that deviations of size $t$ occur infinitely often almost surely, contradicting (i).
\end{proof}

We need the following critical consequence of the Baum--Katz theorem. It is
the point at which the geometry of the volume sequence enters.

\begin{lemma}
\label{lem:polynomial-baum-katz}
Let $q_d=1+1/d$, let $W_1,W_2,\ldots$ be i.i.d. with
$\E W_1=0$ and $\E|W_1|^{q_d}<\infty$, and let
$S_n=\sum_{j=1}^nW_j$ and $M_n=\max_{1\leq j\leq n}|S_j|$.
For every $A,u>0$,
\begin{equation}\label{eq:polynomial-baum-katz}
	\sum_{L=1}^{\infty}
	\Pp\bigl(M_{\lceil AL^d\rceil}>uL^d\bigr)<\infty.
\end{equation}
\end{lemma}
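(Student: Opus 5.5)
The plan is to reduce \eqref{eq:polynomial-baum-katz} to the classical Baum--Katz theorem \cite{BaumKatz1965} with exponents $\alpha=1$ and $r=q_d=1+1/d$. Since $r\alpha=1+1/d>1$, $\alpha>1/2$ and $\E W_1=0$, that theorem gives, for every $c>0$,
\[
	\sum_{n=1}^{\infty}n^{r-2}\,\Pp\bigl(M_n>cn\bigr)
	=\sum_{n=1}^{\infty}n^{1/d-1}\,\Pp\bigl(M_n>cn\bigr)<\infty .
\]
I will use the maximal form of the theorem, which is part of the original statement. If only the version for $|S_n|$ were available, I would pass to $M_n$ via Ottaviani's inequality; this costs only a change of $c$ and is legitimate because $\Pp(|S_n|>cn)\to0$ by the weak law.

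The only remaining task is to convert the sum over the sparse subsequence $n_L=\lceil AL^d\rceil$ into the weighted sum over all $n$. The weight $n^{1/d-1}$ is exactly the density of the $L$'s: $dL/dn\asymp n^{1/d-1}$.

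\emph{Averaging step.} Fix $L$ large enough that $n_L\leq 2AL^d$. For every integer $n\in[n_L,2n_L]$, monotonicity of $j\mapsto M_j$ gives $M_n\geq M_{n_L}$. Also $n\leq 2n_L\leq 4AL^d$, so $uL^d\geq cn$ with $c=u/(4A)$. Hence
\[
	\Pp\bigl(M_{n_L}>uL^d\bigr)\leq\Pp\bigl(M_n>cn\bigr)
	\qquad(n_L\leq n\leq 2n_L).
\]
Averaging over the $n_L+1$ admissible values of $n$ yields
\[
	\Pp\bigl(M_{n_L}>uL^d\bigr)\leq\frac{1}{n_L}\sum_{n=n_L}^{2n_L}\Pp\bigl(M_n>cn\bigr).
\]

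\emph{Counting step.} Sum this bound over $L$ and exchange the order of summation. A fixed $n$ is counted only for those $L$ with $n_L\leq n\leq 2n_L$. For these $L$ we have $AL^d\leq n$ and $AL^d\geq n/2-1$, so there are at most $C_{A,d}\,n^{1/d}$ of them, and each contributes a weight $1/n_L\leq 2/n$ (up to finitely many small $L$). Therefore
\[
	\sum_{L}\Pp\bigl(M_{n_L}>uL^d\bigr)\leq C+C'\sum_{n}n^{1/d-1}\Pp\bigl(M_n>cn\bigr)<\infty .
\]
The finitely many small $L$ excluded above contribute at most a finite constant.

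I expect the main obstacle to be bookkeeping rather than substance. One point is checking that the counting of $L$ per $n$ really produces the factor $n^{1/d}$, which is exactly what makes $q_d$ the critical exponent. The other is confirming that the cited Baum--Katz statement covers the boundary case $d=1$, where $r=2$; this is the strong-law-type boundary $r\alpha=2$ with $\alpha=1$, and it is still inside the theorem's range.
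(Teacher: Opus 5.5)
Your proof is correct. It rests on the same input as the paper, namely the maximal Baum--Katz theorem with weight $n^{q_d-2}=n^{1/d-1}$ and threshold $cn$. What differs is how you pass from the subsequence $n_L=\lceil AL^d\rceil$ to the full weighted series.

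The paper first replaces $A$ by $A_0=\max\{A,1\}$, so that $N_L=\lceil A_0L^d\rceil$ is strictly increasing. It then dominates the $L$-th event by $\{M_n>(u/C)n\}$ for every $n$ in the disjoint block $\{N_L,\ldots,N_{L+1}-1\}$. This requires a uniform lower bound on the block weight $\sum_{n=N_L}^{N_{L+1}-1}n^{q_d-2}$, which in turn needs the gap estimate $N_{L+1}-N_L\geq c_1L^{d-1}$ and a separate remark for $d=1$.

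You instead use overlapping windows $[n_L,2n_L]$, average with the flat weight $1/n_L\leq 2/n$, and control the overlaps by a multiplicity count. Only the upper constraint matters there: $\#\{L: AL^d\leq n\}\leq (n/A)^{1/d}$ gives the weight $2A^{-1/d}n^{1/d-1}$ directly, and your lower bound $AL^d\geq n/2-1$ is superfluous.

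Your route avoids the gap estimate, the reduction to $A\geq 1$, and the case split in $d$. It also works for any sequence with $n_L\asymp L^d$, monotone or not. The paper's route makes the criticality visible as the $L$-independence of the block weight $L^{d-1}\cdot L^{d(q_d-2)}$; in yours it appears as $n^{1/d}\cdot n^{-1}=n^{q_d-2}$.

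One small terminological point: for $d=1$, the case $r=2$, $\alpha=1$ is the Hsu--Robbins--Erd\H{o}s case of Baum--Katz. The strong-law-type boundary is $r\alpha=1$. As you say, the $d=1$ case is covered by the theorem.
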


\begin{proof}
The maximal form of the Baum--Katz theorem
\cite{BaumKatz1965,ChenYiSung2015}, with its parameters $p=1$ and $r=q_d$,
states that for every $a>0$,
\begin{equation}\label{eq:baum-katz}
	\sum_{n=1}^{\infty}n^{q_d-2}\Pp(M_n>an)<\infty.
\end{equation}
Set $A_0=\max\{A,1\}$ and $N_L=\lceil A_0L^d\rceil$. There is a constant
$C<\infty$ such that $N_{L+1}\leq CL^d$ for all $L$. Hence, for every
$n\in\{N_L,\ldots,N_{L+1}-1\}$,
\[
	\{M_{N_L}>uL^d\}\subseteq\{M_n>(u/C)n\}.
\]
Moreover, because $q_d-2=-1+1/d$,
\begin{equation}\label{eq:critical-block-weight}
	\inf_{L\geq L_0}
	\sum_{n=N_L}^{N_{L+1}-1}n^{q_d-2}>0
\end{equation}
for some $L_0$. Indeed, when $d>1$ there are constants $c_1,c_2>0$ such
that $N_{L+1}-N_L\geq c_1L^{d-1}$ and $N_{L+1}\leq c_2L^d$ for all large
$L$. Since $q_d-2\leq0$, the sum in \eqref{eq:critical-block-weight} is
at least
\[
	c_1L^{d-1}(c_2L^d)^{q_d-2},
\]
and the power of $L$ is zero. When $d=1$ the weights equal one and the
interval contains at least one integer. Multiplying the preceding event
inclusion by the weights in
\eqref{eq:critical-block-weight}, summing over $L$, and using the disjoint
intervals in \eqref{eq:baum-katz} proves
\eqref{eq:polynomial-baum-katz}. Replacing $A$ by $A_0$ only enlarges the
maximum.
\end{proof}

\begin{theorem}[Sharp dimension-dependent moment criterion]
\label{thm:moment-criterion}
If
\begin{equation}\label{eq:moment-criterion}
	\E|J|^{1+1/d}<\infty,
\end{equation}
then the periodic pressure laws converge completely. Equivalently, the
periodic pressures converge almost surely to $p_{\infty,\nu}(\beta)$ under
every joint coupling of the finite-volume i.i.d. disorder arrays.
\end{theorem}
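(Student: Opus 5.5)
By \cref{thm:universal-coupling}, it suffices to prove complete convergence. I fix $t>0$ and show
\[
\sum_L\Pp\bigl(|\cP_L^{\per}-p_{\infty,\nu}|>t\bigr)<\infty
\]
for one array of i.i.d. disorder per volume. Only one-volume laws enter, so I may work in a single volume $L$ with i.i.d. $J$.

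The plan is to repeat the block decomposition of \cref{prop:sample-free}, but with quantitative tail bounds summable in $L$ instead of strong laws. Fix $k$, to be chosen large depending on $t$. By \eqref{eq:seam-bound}, \cref{lem:comparison} and \eqref{eq:decoupled-partition}, I expect
\[
|\cP_L^{\per}-p_{\infty,\nu}|\le \Bigl|\tfrac{1}{L^d}\textstyle\sum_z (X_z^{(k)}-k^dp_{k,\nu}^{\fr})\Bigr| + \tfrac{\beta}{L^d}\textstyle\sum_{e\in D_{L,k}\cup S_L}|J_e| + \varepsilon_{L,k},
\]
where $\varepsilon_{L,k}$ is deterministic. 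It collects $|(rk/L)^d-1|\,|p_{k,\nu}^{\fr}|$, the $\log 2$ term from $R_{L,k}$, and $|p_{k,\nu}^{\fr}-p_{\infty,\nu}|\le\beta dm_1/k$. For $k$ fixed large and $L$ large, $\varepsilon_{L,k}<t/4$.

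The random edge sum I split as $\frac{\beta}{L^d}\sum_{e\in D\cup S}(|J_e|-m_1)+\beta m_1\frac{|D|+|S|}{L^d}$. The deterministic part is at most $\beta m_1(d/k+2d^2k/L+d/L)<t/4$ for large $k,L$ by \cref{lem:deleted-count}. The centered part is a partial sum of at most $CL^d$ i.i.d. centered variables $W=|J|-m_1$ with $\E|W|^{q_d}<\infty$. Enumerating the relevant edges in a fixed order, its deviation is bounded by $M_{\lceil AL^d\rceil}$ for a suitable enumeration. Its tail $\Pp(M_{\lceil AL^d\rceil}>uL^d)$ is then summable by \cref{lem:polynomial-baum-katz}. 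The subtlety is that $D_{L,k}$ is not an initial segment of a fixed sequence as $L$ varies, but this does not matter: in each volume the variables are i.i.d., so the sum over $D\cup S$ has the law of $S_{|D\cup S|}$, with $|D\cup S|\le dL^d$. I therefore bound its probability by $\Pp(M_{dL^d}>uL^d)$ directly. This uses only one-volume laws, which is all that complete convergence needs.

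The block term is handled the same way. The $r^d$ block variables $X_z^{(k)}-k^dp_{k,\nu}^{\fr}$ are i.i.d. and centered, and \eqref{eq:basic-bound} gives $|X_z^{(k)}|\le k^d\log2+\beta\sum_{e\in B_z}|J_e|$, so $\E|X_z^{(k)}|^{q_d}<\infty$. Since $r^d\le L^d/k^d$, with $r\ge L/k-1$, I need $\sum_L\Pp(|S_{r^d}|>uL^d)<\infty$. Writing $N_L=r^d$, the sequence runs along $r=\lfloor L/k\rfloor$, each value of $r$ repeated at most $k$ times. So it suffices to have $\sum_r\Pp(M_{r^d}>u'r^d)<\infty$, which is \cref{lem:polynomial-baum-katz} in the variable $r$ with $A=1$.

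The main obstacle is this bookkeeping: matching the block count $r^d$ and the edge count to the critical polynomial scale $L^d$ so that the Baum--Katz weight $n^{q_d-2}$ is exactly what \cref{lem:polynomial-baum-katz} absorbs. I also need to verify that the $q_d$-moment of $|J|$ transfers to the block pressures, which follows from Minkowski's inequality applied to the bound above. Combining the three pieces with a union bound gives summability of $\Pp(|\cP_L^{\per}-p_{\infty,\nu}|>t)$ for every $t>0$, hence complete convergence. The equivalent formulation then follows from \cref{thm:universal-coupling}.
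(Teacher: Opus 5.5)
Your proposal is correct and follows essentially the paper's proof: the same block decomposition with the deleted-edge sum recentered at $m_1$, both random terms controlled by \cref{lem:polynomial-baum-katz} through the one-volume relabeling argument, and \cref{thm:universal-coupling} for the equivalence. The only differences are cosmetic: you absorb the seam $S_L$ into the deleted set in one step instead of first proving free complete convergence, and you reindex the block sum by $r$ where the paper simply uses $r^d\le L^d$ to compare with $M_{L^d}$.
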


\begin{proof}
The assertion is immediate when $\beta=0$, so assume $\beta>0$, let
$q_d=1+1/d$, and write $Y_e=|J_e|$ and $m_1=\E Y_e$.

We first prove complete convergence for free boundary conditions. Fix a
block size $k$ and use the decomposition in \cref{sec:i.i.d.}. Enumerate the
$r^d$ blocks in shell order and set
\[
	X_z^{(k)}=\log Z_{B_z}^{\fr},\qquad
	\mu_k=\E X_z^{(k)}=k^dp_{k,\nu}^{\fr}.
\]
The variables $X_z^{(k)}$ are i.i.d.. By \eqref{eq:basic-bound},
$\E|X_z^{(k)}|^{q_d}<\infty$. Define the deterministic decoupled mean
\[
	a_{L,k}=\frac{r^d\mu_k+|R_{L,k}|\log2}{L^d}=\left(\frac{rk}{L}\right)^dp_{k,\nu}^{\fr} +\frac{|R_{L,k}|}{L^d}\log2.
\]
The coefficient comparison and 
$\sum_{e\in D_{L,k}}Y_e=m_1|D_{L,k}|+\sum_{e\in D_{L,k}}(Y_e-m_1)$ give
\begin{align}
	|\cP_L^{\fr}-p_{\infty,\nu}|
	\leq\frac{1}{L^d}
	\left|\sum_{z}(X_z^{(k)}-\mu_k)\right|
	+|a_{L,k}-p_{\infty,\nu}|+\beta m_1\frac{|D_{L,k}|}{L^d}
	+\frac{\beta}{L^d}
	\left|\sum_{e\in D_{L,k}}(Y_e-m_1)\right|.
	\label{eq:endpoint-free-decomposition}
\end{align}

Fix $t>0$. By \eqref{eq:free-rate}, choose $k$ so large that
$2\beta dm_1/k<t/4$. Since $a_{L,k}\to p_{k,\nu}^{\fr}$ and
\eqref{eq:deleted-count} holds, for all sufficiently large $L$ the two
deterministic terms on the right of
\eqref{eq:endpoint-free-decomposition} have sum less than $t/2$.
Consequently,
\begin{align}
	\Pp(|\cP_L^{\fr}-p_{\infty,\nu}|>t)\leq\Pp\left(
	\left|\sum_z(X_z^{(k)}-\mu_k)\right|>\frac{tL^d}{4}\right)+\Pp\left(
	\left|\sum_{e\in D_{L,k}}(Y_e-m_1)\right|
	>\frac{tL^d}{4\beta}\right).
	\label{eq:endpoint-free-tail}
\end{align}
The first sum contains $r^d\leq L^d$ centered i.i.d. variables with finite
$q_d$th moment. The deleted-edge count is at most
$(d/k+2d^2)L^d$ for $L\geq k$, and the variables $Y_e-m_1$ are centered i.i.d.
with finite $q_d$th moment. Each probability on the right of
\eqref{eq:endpoint-free-tail} is therefore bounded by an event of the form
in \cref{lem:polynomial-baum-katz}. This is a distributional statement:
after relabeling, a sum over any deterministic set of $N$ edges has the law
of the first $N$ partial sum, so no nesting of the sets $D_{L,k}$ across
volumes is assumed. Summing over $L$ proves
\begin{equation}\label{eq:free-complete}
	\sum_{L=1}^{\infty}
	\Pp(|\cP_L^{\fr}-p_{\infty,\nu}|>t)<\infty.
\end{equation}

It remains to restore the periodic seam. Couple the free and periodic
systems with the same disorder field and let $S_L$ be the $dL^{d-1}$ seam
edges. By \cref{lem:comparison},
\[
	\left|\cP_L^{\per}-\cP_L^{\fr}\right|\leq\frac{\beta dm_1}{L}+\frac{\beta}{L^d}\left|\sum_{e\in S_L}(Y_e-m_1)\right|.
\]
The deterministic term tends to zero. The random sum has the same law as
a partial sum of $dL^{d-1}\leq dL^d$ centered i.i.d. variables, so
\cref{lem:polynomial-baum-katz} yields
\[
	\sum_{L=3}^{\infty}\Pp(|\cP_L^{\per}-\cP_L^{\fr}|>t)<\infty.
\]
Combining this with \eqref{eq:free-complete} proves complete convergence of
the periodic pressures. The assertion for every joint coupling follows
from \cref{thm:universal-coupling}.
\end{proof}

\begin{corollary}\label{cor:complete}
Condition \textup{(ii)} of \cref{thm:universal-coupling} holds if either
\begin{enumerate}[label=\textup{(\alph*)}]
\item $|J|\leq K$ almost surely for some $K<\infty$, or
\item $J$ is Gaussian with finite variance $s^2$.
\end{enumerate}
\end{corollary}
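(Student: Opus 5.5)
Both cases follow directly from \cref{thm:moment-criterion}: it suffices to check that $\E|J|^{1+1/d}<\infty$, since that theorem then gives complete convergence of the periodic pressure laws, which is condition (ii) of \cref{thm:universal-coupling}. So the plan is a moment verification in each case, with no further probabilistic argument.

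For (a), $|J|\leq K$ almost surely gives $\E|J|^{1+1/d}\leq K^{1+1/d}<\infty$. This holds for every law supported in $[-K,K]$, whether or not it is centered or symmetric; centering is never needed, because \cref{thm:moment-criterion} centers the variables $Y_e=|J_e|$ and the block pressures $X_z^{(k)}$ internally.

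For (b), write $J=m+sG$ with $G$ standard normal. Gaussian variables have moments of all orders, so $\E|J|^{q}\leq 2^{q-1}(|m|^q+s^q\E|G|^q)<\infty$ for $q=1+1/d$. In particular $m_1(\nu)<\infty$, so $\nu\in\mathcal P_1(\R)$ and $p_{\infty,\nu}(\beta)$ is defined. The degenerate case $s=0$ is a point mass and is already covered by (a).

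There is no real obstacle. The only check is that the hypotheses of \cref{thm:moment-criterion} are exactly the moment condition \eqref{eq:moment-criterion} together with $\nu\in\mathcal P_1(\R)$, and both hold in cases (a) and (b). The case $\beta=0$ is trivial, as noted in that proof. If a more self-contained argument were wanted, one could replace the Baum--Katz input by Hoeffding's inequality in case (a), or by Gaussian concentration of $\log Z$ (which is $\beta\sqrt{dL^d}\,s$-Lipschitz in the disorder vector) in case (b). Either gives tail bounds of order $\exp(-cL^d)$, which are summable in $L$. I will cite \cref{thm:moment-criterion} rather than take this route.
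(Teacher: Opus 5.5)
Your argument is correct, but it takes a different route from the paper. You obtain the corollary as a special case of \cref{thm:moment-criterion}: bounded and Gaussian laws have all moments, so $\E|J|^{1+1/d}<\infty$, and complete convergence then follows from the Baum--Katz block argument. Since the corollary comes after that theorem, there is no circularity.

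The paper instead proves the corollary directly, by concentration of $\cP_L^{\per}$ around its finite-volume mean.
\begin{itemize}
\item In case (a), changing one of the $dL^d$ coordinates changes the pressure by at most $2\beta K/L^d$. McDiarmid's inequality then gives $\Pp(|\cP_L^{\per}-p_{L,\nu}^{\per}|\geq t)\leq 2\exp(-t^2L^d/(2d\beta^2K^2))$.
\item In case (b), the pressure is $s\beta\sqrt d\,L^{-d/2}$-Lipschitz in the standard Gaussian coordinates, and Gaussian concentration gives the same bound with $K$ replaced by $s$.
\item In both cases, the rate \eqref{eq:i.i.d.-rates} moves the centering from $p_{L,\nu}^{\per}$ to $p_{\infty,\nu}$.
\end{itemize}
This is essentially the alternative you sketch at the end. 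In case (a) the paper applies bounded differences to $\log Z$ itself, rather than Hoeffding to the block sums.

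What each route buys: your reduction is shorter and shows that the corollary adds nothing logically beyond \cref{thm:moment-criterion}. However, it depends on the critical Baum--Katz theorem and the block decomposition, and it yields only summability, with no rate. The paper's proof avoids Baum--Katz and the block decomposition entirely. It also gives explicit exponential deviation bounds of order $\exp(-ct^2L^d)$ around $p_{L,\nu}^{\per}$, which depend on the law only through $K$ or $s$. That is far stronger than complete convergence.
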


\begin{proof}
The case $\beta=0$ is deterministic, so suppose $\beta>0$. For bounded couplings, changing one of the $dL^d$ coordinates changes the pressure by at most $2\beta K/L^d$. McDiarmid's inequality \cite{McDiarmid1989} gives
\[
	\Pp\bigl(|\cP_L^{\per}-p_{L,\nu}^{\per}|\geq t\bigr)\leq2\exp\!\left(-\frac{t^2L^d}{2d\beta^2K^2}\right).
\]
For Gaussian couplings, the Lipschitz constant with respect to the standard
Gaussian coordinates is
\[
	s\beta\sqrt d\,L^{-d/2}.
\]
Gaussian concentration \cite[Chapter~5]{BLM2013} gives the same bound with
$K$ replaced by $s$. These estimates are summable in $L$, and
\eqref{eq:i.i.d.-rates} transfers them from $p_{L,\nu}^{\per}$ to
$p_{\infty,\nu}$.
\end{proof}

Together with the next theorem, \cref{thm:moment-criterion} identifies the
sharp threshold on the moment scale.

\begin{theorem}\label{thm:heavy-tail}
Fix $d\geq1$ and $1\leq q<1+1/d$. There is a centered symmetric law
$\nu$ with $\E|J|^q<\infty$ such that, for every fixed $\beta>0$,
\begin{enumerate}[label=\textup{(\roman*)}]
\item the canonical nested free and periodic pressures converge almost surely to $p_{\infty,\nu}(\beta)$;
\item periodic pressures built from arrays independent across $L$ converge in probability but not almost surely.
\end{enumerate}
For $q=1$, the same law works simultaneously for all $d\geq1$.
\end{theorem}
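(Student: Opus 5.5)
Choose a symmetric law with polynomial tail $\Pp(|J|>s)\sim c\,s^{-\alpha}$ for $s\geq s_0$, with exponent $\alpha$ satisfying $q<\alpha<1+1/d$. Then $\E|J|^q<\infty$ and $\E|J|^{1+1/d}=\infty$. For $q=1$ take a tail of order $s^{-1}(\log s)^{-2}$. This law has finite first moment, and every moment $\E|J|^{p}$ with $p>1$ is infinite, so it works for all $d$ at once.

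Part (i) and the convergence-in-probability half of (ii) need no new work. Since $m_1(\nu)<\infty$, \cref{thm:i.i.d.} gives almost sure convergence of the canonical nested free and periodic pressures. It also gives convergence in probability for any construction having the same one-volume marginals, because that is a statement about $Q_L$ alone.

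The real content is failure of almost sure convergence for independent arrays. By \cref{thm:universal-coupling}, equivalently by the second Borel--Cantelli lemma applied directly to the independent arrays, it suffices to find $t>0$ with $\sum_L Q_L(\{|x-p_{\infty,\nu}|>t\})=\infty$. I would produce order-one spikes from a single large bond. By \cref{lem:max-bond}, $\cP_L^{\per}\geq \beta L^{-d}\max_e|J_e|$. By \eqref{eq:basic-bound} the pressure is at least $\log 2$ minus a sum of absolute couplings, which is not directly useful, so combine instead as follows. Apply \cref{lem:comparison} to set the maximal bond to zero, obtaining a pressure $\tilde\cP_L$. Then
\[
\cP_L^{\per}\geq \tfrac12\bigl(\tilde\cP_L+\beta L^{-d}|J_{e^*}|\bigr)+\text{(error)}.
\]
A cleaner route is to condition on the maximal edge $e^*$. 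Summing over $\sigma_u\sigma_v=\pm1$ with the sign favourable to $J_{e^*}$ gives
\[
Z\geq e^{\beta|J_{e^*}|}\,Z'_{\pm},
\]
where $Z'_\pm$ is the partition function of the remaining graph with one constraint. By the spin-flip symmetry argument in the proof of \cref{lem:max-bond}, $\log Z'_{\pm}\geq \log Z_{G\setminus e^*}-\log 2$. Hence
\[
\cP_L^{\per}\geq \cP_{L}^{\per,\setminus e^*}+\beta L^{-d}|J_{e^*}|-L^{-d}\log2 .
\]
The field with $e^*$ removed differs from the full field only on one edge. Its pressure is therefore within the pressure of the field with that bond replaced by an independent copy, up to $\beta L^{-d}|J_{e^*}|$. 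To avoid that circularity, use a truncation instead. Let $\hat\cP_L$ be the pressure with all bonds truncated at level $\varepsilon L^d$. The truncated pressure converges in probability to $p_{\infty,\nu}$, since truncation costs $L^{-d}\sum|J_e|\mathbf 1\{|J_e|>\varepsilon L^d\}$, whose mean is $d\,\E|J|\mathbf 1\{|J|>\varepsilon L^d\}\to0$. Comparing the full system with the system where only the bonds exceeding $\varepsilon L^d$ are removed then gives, on the event that exactly one bond exceeds $\varepsilon L^d$ and that bond exceeds $ML^d$,
\[
\cP_L^{\per}\geq \hat\cP_L+\beta M-o_\Pp(1).
\]
Taking $t=\beta M/2$, the event probability is at least
\[
\tfrac12\, dL^d\,\Pp(|J|>ML^d)\asymp L^{d(1-\alpha)}
\]
for large $L$, because the probability of two exceedances of $\varepsilon L^d$ is $O(L^{2d(1-\alpha)})\to0$. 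Since $\alpha<1+1/d$, we have $d(1-\alpha)>-1$, so $\sum_L L^{d(1-\alpha)}=\infty$. In the $q=1$ case the term is $\asymp L^{0}(\log L)^{-2}\cdot L^{d}\cdot L^{-d}$ up to constants, which is roughly $(\log L)^{-2}$, still a divergent sum.

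The main obstacle is making the lower bound $\cP_L\geq(\text{typical pressure})+\beta M$ rigorous on the spike event, namely separating one huge bond from the rest without circular dependence. I would handle it with the conditioning identity above, applied to the graph after removing the big bond. That gives
\[
\log Z\geq \beta|J_{e^*}|+\log Z_{G\setminus e^*}-\log2 ,
\]
which holds because fixing $\sigma_u\sigma_v$ to the favourable sign keeps at least half the mass of $Z_{G\setminus e^*}$ by the gauge symmetry $\sigma\mapsto-\sigma$ on one side only. Note that this fails in general for loops, so I would instead use the averaging argument of \cref{lem:max-bond}: the constrained uniform measure has the same Gibbs weights up to the sign choice, and taking the better of the two signs gives at least half. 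The graph $G\setminus e^*$ carries bonds independent of $|J_{e^*}|$ given the event, and its pressure converges in probability to $p_{\infty,\nu}$ by \cref{thm:inhomogeneous} with $\delta_L=O(L^{-d})$. Combining these with independence across $L$ and the second Borel--Cantelli lemma finishes (ii).
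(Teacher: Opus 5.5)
\textbf{Gap: the lower bound on the spike event.} Your choice of law and your treatment of (i) and of convergence in probability are fine. The gap is at the step you flag as the main obstacle. The inequality $\log Z_G\geq\beta|J_{e^*}|+\log Z_{G\setminus e^*}-\log 2$ is false in general.

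Write $s=\operatorname{sign}(J_{e^*})$ and let $Z'_{\pm}$ be the parts of $Z_{G\setminus e^*}$ with $\sigma_u\sigma_v=\pm1$. Then $Z_G=e^{\beta|J_{e^*}|}Z'_s+e^{-\beta|J_{e^*}|}Z'_{-s}$, and nothing forces $Z'_s\geq Z_{G\setminus e^*}/2$. When a frustrated cycle through $e^*$ makes the rest of the graph prefer $\sigma_u\sigma_v=-s$, it fails. For a triangle with $J_{uv}=J$, $J_{uw}=1$, $J_{vw}=-1$, one gets $Z'_s=4$ and $Z'_{-s}=4\cosh 2\beta$, so your bound fails once $J$ is large; a frustrated plaquette of the torus behaves the same way.

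The averaging argument of \cref{lem:max-bond} does not rescue this. It averages under the uniform measure and yields only $\log Z_G\geq(|V|-1)\log 2+\beta|J_{e^*}|$, not half of the Gibbs mass of $G\setminus e^*$.

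A correct replacement flips $\sigma_v$. This gives $Z'_s\geq e^{-2\beta\Sigma_v}Z'_{-s}$ with $\Sigma_v=\sum_{e\ni v,\,e\neq e^*}|J_e|$, hence an extra error term $-2\beta\Sigma_v$. On your one-exceedance event this costs at most $2(2d-1)\beta\varepsilon$ per site. You would still have to use the independence of the big bond from the other bonds explicitly. Convergence in probability of $\hat\cP_L$ alone says nothing about its behaviour on an event whose probability tends to zero.

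\textbf{The missing idea: the detour is unnecessary.} This is where the paper's proof differs. You already wrote $\cP_L^{\per}\geq\beta L^{-d}M_L$ from \cref{lem:max-bond}, and \eqref{eq:basic-bound} gives $p_{\infty,\nu}(\beta)\leq\log 2+\beta d m_1$. Take the spike threshold $AL^d$ with $A=dm_1+(\log 2+1)/\beta$. Then directly
\[
\{M_L>AL^d\}\subseteq\{\cP_L^{\per}>p_{\infty,\nu}(\beta)+1\},
\]
with no comparison to a ``typical'' pressure, no truncation, and no conditioning on a single exceedance. Because the tail is a power up to slowly varying factors, enlarging the threshold by the constant $A$ changes $\Pp(M_L>AL^d)\asymp dL^d\,\Pp(|J|>AL^d)$ only by a constant factor. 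The sum over $L$ therefore still diverges, and the second Borel--Cantelli lemma for arrays independent across $L$ finishes (ii).
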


\begin{proof}
Let $X\geq0$ have survival function
\begin{equation}\label{eq:heavy-tail-law}
	\Pp(X>t)=\frac{1}{(1+t)^q[\log(e+t)]^2},\qquad t\geq0,
\end{equation}
and let $J=\varepsilon X$, where $\varepsilon$ is an independent symmetric sign. The right side of \eqref{eq:heavy-tail-law} is continuous and nonincreasing, equals one at zero, and tends to zero. Moreover,
\[
	\E X^q=q\int_0^\infty t^{q-1}\Pp(X>t)\,dt<\infty,
\]
because the integrand is asymptotic to $q/[t(\log t)^2]$. Thus $J$ is centered and symmetric, and its first absolute moment $m_1$ is finite.

Fix $d$ and $\beta>0$, and set
\[
	A=d m_1+\frac{\log2+1}{\beta}.
\]
The finite-volume upper bound \eqref{eq:basic-bound} and \cref{thm:i.i.d.} give
$p_{\infty,\nu}(\beta)\leq\log2+\beta d m_1$. Let $M_L$ be the largest absolute coupling among the $dL^d$ periodic edges. Since the periodic graph is simple for $L\geq3$, \cref{lem:max-bond} gives $\cP_L^{\per}\geq\left(\beta M_L\right)/L^d$.
Consequently,
\begin{equation}\label{eq:spike-inclusion}
	\{M_L>AL^d\}
	\subseteq\{\cP_L^{\per}>p_{\infty,\nu}(\beta)+1\}.
\end{equation}

Let $n_L=dL^d$ and $y_L=\Pp(X>AL^d)$. The explicit tail gives
$n_Ly_L=O(L^{-d(q-1)}(\log L)^{-2})$, hence $n_Ly_L\leq 1$ for a large $L$. For such a $L$,
\[	
	\Pp(M_L>AL^d)=1-(1-y_L)^{n_L}\geq 1-e^{-n_Ly_L}
	\geq\frac{n_Ly_L}{2}.
\]
Also, for suitable constants $C,c>0$ and all large $L$,
\[
	1+AL^d\leq2AL^d,\qquad
	\log(e+AL^d)\leq C\log L,\qquad
	n_Ly_L\geq\frac{cL^{-d(q-1)}}{(\log L)^2}.
\]
Since $d(q-1)<1$, choose $\eta>0$ with $d(q-1)+\eta<1$.
The inequality $(\log L)^2\leq L^\eta$ holds eventually, so the last lower
bound has a divergent sum. By \eqref{eq:spike-inclusion}, the pressure
deviations are not summable. If the arrays are independent across $L$, the
second Borel--Cantelli lemma produces infinitely many such deviations,
proving failure of almost sure convergence. The one-volume marginals
nevertheless satisfy \eqref{eq:marginal-convergence}, so convergence in
probability holds. Canonical almost sure convergence follows from
\cref{thm:i.i.d.}. When $q=1$, \eqref{eq:heavy-tail-law} is independent of $d$,
which proves the last assertion.
\end{proof}

\section{Sharpness of the first moment in one dimension}\label{sec:sharpness}

The first absolute moment used in the baseline and transport theorems is a genuine threshold in dimension one.

\begin{proposition}\label{prop:one-dimensional}
For the one-dimensional free chain and $\beta>0$,
\begin{equation}\label{eq:one-dimensional-pressure}
	\cP_L^{\fr}(\beta,J)
	=\log2+\frac{1}{L}\sum_{i=0}^{L-2}\log\cosh(\beta J_i).
\end{equation}
If $\E|J|<\infty$, the almost-sure limit is
$\log2+\E\log\cosh(\beta J)$. If $\E|J|=\infty$, then the quenched finite-volume pressure is infinite in expectation for $L\geq2$, and
$\cP_L^{\fr}(\beta,J)\to\infty$ almost surely.
\end{proposition}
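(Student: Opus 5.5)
The plan is to compute the free-chain partition function exactly by gauge transformation, and then read off both regimes from the strong law of large numbers for $\log\cosh(\beta J_i)$.

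\emph{Step 1: exact formula.} For $d=1$ the free chain on $\{0,\dots,L-1\}$ is a tree, so I substitute $\tau_0=\sigma_0$ and $\tau_{i+1}=\sigma_i\sigma_{i+1}$ for $0\le i\le L-2$. This map is a bijection of $\{-1,1\}^L$. The Boltzmann weight becomes $\prod_{i=0}^{L-2}e^{\beta J_i\tau_{i+1}}$, and summing each $\tau$ independently gives
\[
Z_L^{\fr}=2\prod_{i=0}^{L-2}2\cosh(\beta J_i)=2^L\prod_{i=0}^{L-2}\cosh(\beta J_i).
\]
Dividing the logarithm by $L$ gives \eqref{eq:one-dimensional-pressure}.

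\emph{Step 2: the integrable case.} Since $\beta|t|-\log2\le\log\cosh(\beta t)\le\beta|t|$, the variables $\log\cosh(\beta J_i)$ are i.i.d., nonnegative and integrable exactly when $\E|J|<\infty$. The ordinary strong law gives $\frac{1}{L-1}\sum_{i=0}^{L-2}\log\cosh(\beta J_i)\to\E\log\cosh(\beta J)$ almost surely. Because $(L-1)/L\to1$, the stated limit $\log2+\E\log\cosh(\beta J)$ follows. It agrees with $p_{\infty,\nu}$ by \cref{thm:i.i.d.}, though that identification is not needed here.

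\emph{Step 3: the non-integrable case.} When $L\ge2$ the sum contains at least one term. The lower bound $\log\cosh(\beta t)\ge\beta|t|-\log2$ together with $\beta>0$ then gives $\E\cP_L^{\fr}=+\infty$. For the almost-sure divergence I use the strong law for nonnegative i.i.d. variables with infinite mean. I would cite this directly, or prove it by truncation: for each $M$, the strong law for $\min(\log\cosh(\beta J_i),M)$ gives
\[
\liminf_L \frac1L\sum_{i=0}^{L-2}\log\cosh(\beta J_i)\ge\E\min(\log\cosh(\beta J),M)
\]
almost surely. Letting $M\to\infty$ along the integers, monotone convergence sends the right side to $\E\log\cosh(\beta J)=\infty$. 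Hence $\cP_L^{\fr}\to\infty$ almost surely.

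The only point needing care is this truncation argument in Step 3. I need the bound $\log\cosh(\beta t)\ge\beta|t|-\log2$ to convert $\E|J|=\infty$ into $\E\log\cosh(\beta J)=\infty$, and this is where $\beta>0$ is used. Everything else is routine.
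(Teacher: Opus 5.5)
Your proof is correct and follows the paper's argument: the exact product formula $Z_L^{\fr}=2^L\prod_{i=0}^{L-2}\cosh(\beta J_i)$, the two-sided bound $\beta|t|-\log 2\le\log\cosh(\beta t)\le\beta|t|$, the ordinary strong law in the integrable case, and truncation plus monotone convergence in the infinite-mean case. The only difference is that you derive the product formula with the bond-variable substitution $\tau_{i+1}=\sigma_i\sigma_{i+1}$, while the paper sums out endpoint spins one at a time; on a chain these are the same computation.
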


\begin{proof}
Successively summing over endpoint spins gives
\[
	Z_L^{\fr}=2^L\prod_{i=0}^{L-2}\cosh(\beta J_i),
\]
which proves \eqref{eq:one-dimensional-pressure}. Since
\[
	\beta|x|-\log2\leq\log\cosh(\beta x)\leq\beta|x|,
\]
the expectation of $\log\cosh(\beta J)$ is finite exactly when $\E|J|$ is finite. The ordinary strong law proves the finite-limit assertion. In the infinite-mean case, apply the strong law to
$\min\{\log\cosh(\beta J_i), M\}$ and then send $M\to\infty$ by monotone convergence.
\end{proof}

\section{Discussion}

The principal conceptual point is that the joint construction across volumes
is part of an almost-sure thermodynamic-limit statement. Two constructions
may have the same disorder law at every fixed $L$, and hence identical
one-volume pressure laws, quenched means, and convergence in probability,
while producing different recurrence of exceptional fluctuations. A
pathwise theorem must therefore specify whether the arrays are nested,
independently resampled, or quantified over all joint couplings. The
universal-coupling criterion isolates precisely this dependence.

The exponent $1+1/d$ comes from converting Baum--Katz summability in the
sample size $n$ into summability over side lengths $L$. If
$N_L\asymp L^d$, then $N_{L+1}-N_L\asymp L^{d-1}$, while the Baum--Katz
weight on this block satisfies $n^{q-2}\asymp L^{d(q-2)}$. The total block
weight is therefore of order $L^{d-1+d(q-2)}$, which is of constant order
exactly when $q=1+1/d$. This block estimate is the endpoint step and cannot
be obtained merely by taking a limit of noncritical bounds.

The condition $\E|J|^{1+1/d}<\infty$ is a distribution-free sufficient
condition for complete convergence and hence for almost-sure convergence
under every joint coupling. The counterexamples show that no smaller power
moment gives such a uniform guarantee over centered symmetric laws. They do
not show that every fixed law with an infinite critical moment fails complete
convergence. A law-by-law criterion would require finer tail information,
including slowly varying corrections, and is not addressed here.

The deterministic comparison, transport estimates, and complete-convergence
upper-bound arguments extend to finite-state nearest-neighbor models with
uniformly bounded local observables. By contrast, the large-bond lower bound
used for sharpness is proved here only for the Ising interaction. Our
sharpness claim is therefore confined to the Edwards--Anderson Ising model.
Extensions to genuine multi-body interactions are not asserted.

\section*{Declarations}

\subsection*{Funding}
The authors declare that no funds, grants, or other support were received during the preparation of this manuscript.

\subsection*{Competing interests}
The authors have no relevant financial or non-financial interests to disclose.

\subsection*{Authors' contributions}
All authors contributed to the conceptualization, mathematical analysis, proof verification, and writing of the manuscript. All authors contributed equally to this work. All authors read and approved the final manuscript.

\subsection*{Data availability}
Data sharing is not applicable because no datasets were generated or analyzed in this theoretical study.

\subsection*{Ethical Approval/Human and Animal Rights}
This article does not contain any studies with human participants or animals performed by any of the authors.

\Addresses

\end{document}